\newtheorem{theorem}{Theorem}
\newtheorem{lemma}{Lemma}
\newtheorem{corollary}{Corollary}
\theoremstyle{plain}
\newtheorem{example}{Example}
\newcommand{\part}[2]{\frac{\partial #1}{\partial #2}}
\newcommand{\N}{\mathcal{N}}
\newcommand{\R}{\mathbb{R}}
\newcommand{\E}{\mathbb{E}}
\DeclareMathOperator{\sech}{sech}
\newcommand{\HS}{\mathrm{HS}}
\DeclareMathOperator{\Tr}{Tr}
\DeclareMathOperator{\Var}{Var}
\DeclareMathOperator{\Cov}{Cov}
\begin{document}
\title{Convexity of mutual information\\along the heat flow} 

\author{\IEEEauthorblockN{Andre Wibisono \,and\, Varun Jog}
\IEEEauthorblockA{Department of Electrical \& Computer Engineering \\
University of Wisconsin - Madison\\
Madison, WI 53706 \\
Email: aywibisono@wisc.edu, vjog@wisc.edu}
}

\maketitle

\begin{abstract}
We study the convexity of mutual information along the evolution of the heat equation.
We prove that if the initial distribution is log-concave, then mutual information is always a convex function of time.
We also prove that if the initial distribution is either bounded, or has finite fourth moment and Fisher information, then mutual information is eventually convex, i.e., convex for all large time.
Finally, we provide counterexamples to show that mutual information can be nonconvex at small time.
\end{abstract}

\section{Introduction}

The heat equation plays a fundamental role in many fields.
In thermodynamics, it describes the diffusion of heat in a body due to temperature differences.
In probability theory, it describes the evolution of the Brownian motion.
In information theory, it describes the additive white Gaussian noise channel, which is one of the most important communication channels.
In general, the heat equation can be used to model the transport of any quantity in a medium via a diffusion process. 
It also forms the basis for more general stochastic processes, such as the Ornstein-Uhlenbeck process or the Fokker-Planck process.
Therefore, the heat equation has found applications in diverse scientific disciplines---from explaining the evolution of zebra stripes~\cite{Tur52} to modeling stock prices via the Black-Scholes formula~\cite{BlaSch73}.
We are interested in the heat flow, which is the flow of the heat equation in the space of random variables.

The properties of the heat flow are closely linked to entropy.
Indeed, one important interpretation of the heat flow is as the flow that increases entropy as fast as possible.
More precisely, heat flow is the gradient flow (i.e., the steepest descent flow) of negative entropy in the space of probability distributions with the Wasserstein metric structure~\cite{JKO98}.
In this paper we will not need this result, but only use a certain key identity in our calculation.
Nevertheless, this relation suggests an intricate connection between entropy and the heat flow.

The behavior of entropy along the heat flow has been long studied.
The gradient flow interpretation above shows that entropy is increasing along the heat flow.
In particular, De Bruijn's identity~\cite{Sta59} states that the time derivative of entropy along the heat flow is given by the Fisher information, which is always positive.
Moreover, entropy is a concave function of time along the heat flow.
This is because
the second time derivative of entropy along the heat flow is the negative of the second-order Fisher information~\cite{McKean66,Tos99,Vil00};
the latter identity also implies the concavity of entropy power along the heat flow~\cite{Cos85,Dembo89,Dembo91}.
It is further conjectured that the higher derivatives of entropy along the heat flow have alternating signs~\cite{McKean66,Vil02,Che15}.
In one dimension, this has been verified up to the fourth derivative~\cite{Che15};
in multi dimension, this is true for the third derivative when the initial distribution is log-concave~\cite{Tos15}.

On the other hand, 
the behavior of mutual information along the heat flow has been less explored.
Clearly mutual information is decreasing along the heat flow by the data processing inequality, since the heat flow is a Markov chain.
De Bruijn's identity implies that the time derivative of mutual information along the heat flow is the negative of the mutual Fisher information;
the latter is proportional to the minimum mean square error (mmse) of estimating the initial from the final distribution, thus recovering the I-MMSE relation for the additive Gaussian channel~\cite{GuoEtAl05}.
Similarly, the second time derivative of mutual information along the heat flow is the mutual version of the second-order Fisher information; unfortunately, it does not always have a definite sign.

In this paper we study the convexity of mutual information along the heat flow.
This amounts to determining when the mutual second-order Fisher information is positive along the heat flow.
We show that in general, the mutual second-order Fisher information is positive whenever the final distribution is log-concave.
Since the heat flow preserves log-concavity, this implies our first main result:
If the initial distribution is log-concave, then mutual information is always convex along the heat flow.
In some cases, for example when the initial distribution is bounded, the heat flow implies eventual log-concavity, which means the final distribution eventually becomes log-concave; this implies mutual information is eventually convex along the heat flow for these cases.
Furthermore, we prove that in general, regardless of log-concavity, mutual information is eventually convex along the heat flow whenever the initial distribution has finite fourth moment and Fisher information.

Unlike entropy, however, we show that mutual information can be nonconvex along the heat flow.
We provide explicit counterexamples, namely mixtures of point masses and mixtures of Gaussians, for which mutual information along the heat flow is nonconvex at small time; furthermore, by scaling we can arrange the region of nonconvexity to engulf any finite time.
We elaborate on these results below.

\section{Background and problem setup}

\subsection{The heat flow}

The heat equation in $\R^n$ is the partial differential equation:
$$\part{\rho}{t} = \frac{1}{2} \Delta \rho$$
where $\rho = \rho(x,t)$ for $x \in \R^n$, $t \ge 0$, and $\Delta = \sum_{i=1}^n \part{^2}{x_i^2}$ is the Laplacian operator.
This equation conserves mass, so if $\rho_0 = \rho(\cdot,0)$ is a probability distribution, then so is $\rho_t = \rho(\cdot,t)$ for all $t > 0$. 
The heat equation admits a closed-form solution via convolution:
$$\rho_t = \rho_0 \ast \gamma_t$$
where $\gamma_t(x) = (2\pi t)^{-\frac{n}{2}} e^{-\frac{\|x\|^2}{2t}}$ is the heat kernel at time $t$.
Probabilistically, if $X_0 \sim \rho_0$ is a random variable in $\R^n$, then $X_t \sim \rho_t$ that evolves following the heat equation is given by
$$X_t = X_0 + \sqrt{t} Z$$
where $Z \sim \N(0,I)$ is the standard Gaussian random variable in $\R^n$ independent of $X_0$.
We call this the heat flow.
(Note that the true solution to the heat equation is the Brownian motion, but at each time $t$ it has the same distribution as $X_t$ above.)
Observe that even when $X_0 \sim \rho_0$ has a singular density, 
$X_t \sim \rho_t$ has a smooth positive density for all $t > 0$.

\begin{example}
If $X_0 \sim \delta_{a}$ is a point mass at some $a \in \R^n$, then $X_t \sim \N(a, tI)$ is Gaussian with mean $a$ and covariance $tI$.
\end{example}

\begin{example}
If $X_0 \sim \N(\mu,\Sigma)$ is Gaussian, then $X_t \sim \N(\mu,\Sigma+tI)$ is also Gaussian with the same mean and increasing covariance.
\end{example}

\begin{example}
If $X_0 \sim \sum_{i=1}^k p_i \delta_{a_i}$ is a mixture of point masses, then $X_t \sim \sum_{i=1}^k p_i \N(a_i,tI)$ is a mixture of Gaussians with the same covariance $tI$.
\end{example}

\begin{example}
If $X_0 \sim \sum_{i=1}^k p_i \N(a_i, \Sigma_i)$ is a mixture of Gaussians, then $X_t \sim \sum_{i=1}^k p_i \N(a_i,\Sigma_i+tI)$ is also a mixture of Gaussians with the same means and increasing covariance.
\end{example}

\subsection{Entropy and Fisher information}

Let $X$ be a random variable in $\R^n$ with a smooth positive density $\rho$.

The (differential) {\em entropy} of $X \sim \rho$ is
$$H(X) = -\int_{\R^n} \rho(x) \log \rho(x) \, dx.$$

The {\em Fisher information} of $X \sim \rho$ is
$$J(X) = \int_{\R^n} \rho(x) \|\nabla \log \rho(x)\|^2 \, dx.$$

The {\em second-order Fisher information} of $X \sim \rho$ is
$$K(X) = \int_{\R^n} \rho(x) \|\nabla^2 \log \rho(x)\|_{\HS}^2 \, dx.$$
Here $\|A\|_{\HS}^2 = \sum_{i,j=1}^n A_{ij}^2 = \sum_{i=1}^n \lambda_i(A)^2$ is the Hilbert-Schmidt (or Frobenius) norm of a symmetric matrix $A = (A_{ij}) \in \R^{n \times n}$ with eigenvalues $\lambda_i(A) \in \R$.

In general we have the inequality
\begin{align}\label{Eq:KJ}
K(X) \ge \frac{J(X)^2}{n}
\end{align}
which is equivalent to the entropy power inequality~\cite{Cos85,Dembo89,Dembo91,Vil00}.

\begin{example}
If $X \sim \N(\mu,\Sigma)$ is Gaussian, then
\begin{align*}
H(X) &= \frac{1}{2} \log \det (2 \pi e \Sigma) = \frac{1}{2} \sum_{i=1}^n \log (2 \pi e \lambda_i) \\ 
J(X) &= \Tr(\Sigma^{-1}) = \sum_{i=1}^n \frac{1}{\lambda_i} \\
K(X) &= \|\Sigma^{-1}\|^2_{\HS} = \sum_{i=1}^n \frac{1}{\lambda_i^2}
\end{align*}
where $\lambda_1,\dots,\lambda_n > 0$ are the eigenvalues of $\Sigma \succ 0$.
\end{example}

Our interest in the first and second-order Fisher information is because they are the first and second derivatives of entropy along the heat flow.

\begin{lemma}\label{Lem:DerEnt}
Along the heat flow $X_t = X_0 + \sqrt{t} Z$,
\begin{align*}
\frac{d}{dt} H(X_t) &= \frac{1}{2} J(X_t) \\
\frac{d^2}{dt^2} H(X_t) &= -\frac{1}{2} K(X_t).
\end{align*}
\end{lemma}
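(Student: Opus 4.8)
The plan is to differentiate the integral defining $H(X_t)$ directly, using the heat equation $\partial_t \rho_t = \tfrac12 \Delta \rho_t$ together with integration by parts. Throughout I write $\ell_t = \log \rho_t$; since $\rho_t$ is obtained by convolving $\rho_0$ with the heat kernel, for every $t > 0$ it is smooth and strictly positive with Gaussian-type tails, so $\ell_t$ and its derivatives grow at most polynomially and all the integrals below converge, with every boundary term vanishing at infinity (this also legitimizes differentiating under the integral sign via dominated convergence). I will also use the induced evolution equation for the log-density, $\partial_t \ell_t = \tfrac{\Delta \rho_t}{2 \rho_t} = \tfrac12(\Delta \ell_t + \|\nabla \ell_t\|^2)$, which follows from the pointwise identity $\Delta \rho = \rho(\Delta \ell + \|\nabla \ell\|^2)$.

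For the first identity I would compute $\frac{d}{dt} H(X_t) = -\int (\ell_t + 1)\,\partial_t \rho_t \, dx = -\tfrac12 \int (\ell_t + 1)\, \Delta \rho_t \, dx$. The constant term drops because $\int \Delta \rho_t \, dx = 0$ by mass conservation, and integrating by parts once converts $-\tfrac12 \int \ell_t \, \Delta \rho_t \, dx$ into $\tfrac12 \int \langle \nabla \ell_t, \nabla \rho_t \rangle \, dx = \tfrac12 \int \rho_t \|\nabla \ell_t\|^2 \, dx = \tfrac12 J(X_t)$, which is De Bruijn's identity.

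For the second identity it suffices to show $\frac{d}{dt} J(X_t) = -K(X_t)$, since then $\frac{d^2}{dt^2} H(X_t) = \tfrac12 \frac{d}{dt} J(X_t) = -\tfrac12 K(X_t)$. Writing $J(X_t) = \int \langle \nabla \rho_t, \nabla \ell_t \rangle \, dx$ and differentiating produces two terms, $\int \langle \nabla \partial_t \rho_t, \nabla \ell_t \rangle \, dx$ and $\int \langle \nabla \rho_t, \nabla \partial_t \ell_t \rangle \, dx$; substituting the two evolution equations and integrating by parts turns both into integrals of $\rho_t$ against polynomials in $\Delta \ell_t$ and $\|\nabla \ell_t\|^2$, namely the three quantities $\int \rho_t (\Delta \ell_t)^2$, $\int \rho_t \, \Delta \ell_t \, \|\nabla \ell_t\|^2$, and $\int \rho_t \|\nabla \ell_t\|^4$.

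The crux, and the step I expect to be the main obstacle, is recognizing that this particular combination equals $-\int \rho_t \|\nabla^2 \ell_t\|_{\HS}^2 \, dx = -K(X_t)$. This is where Bochner's identity enters: in flat $\R^n$ one has $\tfrac12 \Delta \|\nabla \ell\|^2 = \|\nabla^2 \ell\|_{\HS}^2 + \langle \nabla \ell, \nabla \Delta \ell \rangle$, and integrating this against $\rho_t$ together with two further integrations by parts yields the integrated identity $\int \rho_t \|\nabla^2 \ell_t\|_{\HS}^2 \, dx = -\int \rho_t \langle \nabla \ell_t, \nabla \Delta \ell_t \rangle \, dx - \tfrac12 \int \rho_t \langle \nabla \ell_t, \nabla \|\nabla \ell_t\|^2 \rangle \, dx$. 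Matching the right-hand side against the terms produced by differentiating $J$ then collapses everything to $-K(X_t)$. The delicate part is purely the integration-by-parts bookkeeping and the verification that every boundary term vanishes and every integral is finite; the Gaussian tails of $\rho_t$ for $t > 0$ supply exactly this, and one can reduce the technicalities further by first establishing the identities on compactly supported smooth approximations and passing to the limit, while the Gaussian examples of the previous subsection serve as a sanity check that fixes all the constants.
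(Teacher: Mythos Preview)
Your proposal is correct and follows exactly the approach the paper indicates: direct differentiation using the heat equation, integration by parts for De~Bruijn's identity, and Bochner's formula for the second derivative. The paper's own proof is in fact just a pointer to the literature (Stam, McKean, Toscani, Villani) together with the remark ``direct calculation and integration by parts (and Bochner's formula for the second identity)'', so your outline is a faithful and considerably more detailed expansion of the same argument.
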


Note that since $J(X_t) \ge 0$, the first derivative of entropy is positive, which means entropy is increasing along the heat flow.
Similarly, since $K(X_t) \ge 0$, the second derivative of entropy is negative, which means entropy is a concave function along the heat flow.

\subsection{Mutual information and mutual Fisher information}
\label{Sec:Mut}

Let $(X,Y)$ be a joint random variable in $\R^n \times \R^n$ with a joint density $\rho_{XY}$, which we can factorize into a product of marginal and conditional densities:
$$\rho_{XY}(x,y) = \rho_X(x) \, \rho_{Y|X}(y\,|\,x) = \rho_Y(y) \, \rho_{X|Y}(x\,|\,y).$$
We assume $\rho_Y$ and $\rho_{Y|X}(\cdot\,|\,x)$ are smooth and positive for all $x \in \R^n$.

The {\em mutual information} of $(X,Y)$ is
$$I(X;Y) = H(Y) - H(Y\,|\,X)$$
where $H(Y\,|\,X) = \int \rho_X(x) H(\rho_{Y|X}(\cdot\,|\,x))\,dx$ is the expected entropy of the conditional densities.

The {\em mutual Fisher information} of $(X,Y)$ is
$$J(X;Y) = J(Y\,|\,X) - J(Y)$$
where $J(Y\,|\,X) = \int \rho_X(x) J(\rho_{Y|X}(\cdot\,|\,x))\,dx$ is the expected Fisher information of the conditional densities.

The {\em mutual second-order Fisher information} of $(X,Y)$ is
$$K(X;Y) = K(Y\,|\,X) - K(Y)$$
where $K(Y\,|\,X) = \int \rho_X(x) K(\rho_{Y|X}(\cdot\,|\,x))\,dx$ is the expected second-order Fisher information of the conditional densities.

Mutual information is symmetric: $I(X;Y) = I(Y;X)$.
However, mutual first and second-order Fisher information are not symmetric: in general, $J(X;Y) \neq J(Y;X)$ and $K(X;Y) \neq K(Y;X)$.

The mutual Fisher information $J(X;Y)$ can be shown to be equal to the {\em backward (statistical) Fisher information} $\Phi(X\,|\,Y)$, which is manifestly positive. 
The mutual second-order Fisher information $K(X;Y)$, on the other hand, is not always positive, but it can be represented in terms of the {\em backward (statistical) second-order Fisher information} $\Psi(X\,|\,Y)$; see Appendix~\ref{App:ProofKJMut} for detail.

Analogous to the basic (non-mutual) inequality~\eqref{Eq:KJ}, we have the following result. 
Recall that a smooth probability distribution $\rho$ in $\R^n$ is {\em $\alpha$-log-semiconcave} for some $\alpha \in \R$ if
$$-\nabla^2 \log \rho(x) \succeq \alpha I~~~~\forall \, x \in \R^n.$$
When $\alpha \ge 0$, we say $\rho$ is log-concave.

\begin{lemma}\label{Lem:KJMut}
If $Y \sim \rho_Y$ is $\alpha$-log-semiconcave for some $\alpha \in \R$, then
$$K(X;Y) \ge \frac{J(X;Y)^2}{n} + 2\alpha J(X;Y).$$
\end{lemma}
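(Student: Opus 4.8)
The plan is to reduce everything to two conditional-expectation (tower) identities relating the marginal score and Hessian of $Y$ to the conditional ones, and then to combine these with the same elementary inequalities that prove the non-mutual bound~\eqref{Eq:KJ}. First I would introduce the conditional score $S(y,x) = \nabla_y \log \rho_{Y|X}(y\,|\,x)$ and conditional Hessian $T(y,x) = \nabla_y^2 \log \rho_{Y|X}(y\,|\,x)$, together with their marginal counterparts $\bar S(y) = \nabla \log \rho_Y(y)$ and $\bar T(y) = \nabla^2 \log \rho_Y(y)$. Differentiating $\rho_Y(y) = \int \rho_X(x)\,\rho_{Y|X}(y\,|\,x)\,dx$ under the integral sign should give the first-order identity $\bar S(Y) = \E[S(Y,X)\mid Y]$. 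Differentiating once more, and using $\nabla_y \log \rho_{X|Y}(x\,|\,y) = S(y,x) - \bar S(y)$ to account for the $y$-dependence of the posterior $\rho_{X|Y}(\cdot\,|\,y)$, I expect the crucial second-order identity
$$\bar T(Y) = \E[T(Y,X)\mid Y] + C(Y), \qquad C(y) := \Cov\bigl(S(y,X)\mid Y=y\bigr) \succeq 0.$$
This conditional covariance $C$ is precisely the source of the extra term that distinguishes the mutual inequality from the basic one.

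With these identities in hand, both mutual quantities become conditional-variance expressions. From the first-order identity and the law of total variance, $J(X;Y) = \E[\Tr C(Y)]$, and equivalently $J(X;Y) = \E[\Tr(\bar T - T)]$. For the second-order quantity I would expand $K(X;Y) = \E[\|T\|_{\HS}^2] - \E[\|\bar T\|_{\HS}^2]$ by writing $T = \bar T + (T - \bar T)$ and taking conditional expectations; the cross term collapses via $\E[T\mid Y] - \bar T = -C$, yielding
$$K(X;Y) = \E\bigl[\|T - \bar T\|_{\HS}^2\bigr] - 2\,\E\bigl[\langle \bar T, C\rangle_{\HS}\bigr].$$
Here the first term is exactly the backward second-order Fisher information $\Psi(X\mid Y) = \E[\|\nabla_y^2 \log \rho_{X|Y}(X\mid Y)\|_{\HS}^2]$, since $\nabla_y^2 \log \rho_{X|Y}(x\,|\,y) = T(y,x) - \bar T(y)$.

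It then remains to bound the two terms separately. For the cross term, $\alpha$-log-semiconcavity of $Y$ means $-\bar T \succeq \alpha I$; since $C \succeq 0$ and the trace of a product of positive semidefinite matrices is nonnegative, $-\langle \bar T, C\rangle_{\HS} = \Tr\bigl((-\bar T)C\bigr) = \alpha\Tr C + \Tr\bigl((-\bar T - \alpha I)C\bigr) \ge \alpha \Tr C$, so $-2\E[\langle\bar T, C\rangle_{\HS}] \ge 2\alpha J(X;Y)$. For the first term I would mimic the proof of $K \ge J^2/n$: by Jensen and then Cauchy--Schwarz on the eigenvalues (i.e. $(\Tr M)^2 \le n\|M\|_{\HS}^2$ for symmetric $M$),
$$J(X;Y)^2 = \E[\Tr(\bar T - T)]^2 \le \E\bigl[(\Tr(\bar T - T))^2\bigr] \le n\,\E\bigl[\|\bar T - T\|_{\HS}^2\bigr],$$
so that $\E[\|T - \bar T\|_{\HS}^2] \ge J(X;Y)^2/n$. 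Adding the two bounds gives the claim.

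The main obstacle is the second-order tower identity: carrying out the differentiation under the integral correctly, and in particular recognizing that differentiating the posterior $\rho_{X|Y}(\cdot\,|\,y)$ in $y$ contributes precisely the conditional covariance $C \succeq 0$. Once that identity is in place, the rest is the total-variance decomposition together with the two scalar/matrix inequalities that already underlie the non-mutual bound; the $\alpha$-dependence enters only through the single step $\Tr((-\bar T - \alpha I)C) \ge 0$.
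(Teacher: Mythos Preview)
Your proposal is correct and is essentially the paper's own argument in different notation: your conditional covariance $C(y)=\Cov(S(y,X)\mid Y=y)$ is exactly the paper's pointwise backward Fisher information matrix $\widetilde\Phi(X\,|\,Y\!=\!y)$, your decomposition $K(X;Y)=\E[\|T-\bar T\|_{\HS}^2]-2\,\E[\langle\bar T,C\rangle_{\HS}]$ is Lemma~\ref{Lem:KPsi}, and your two bounds on these terms are precisely Lemmas~\ref{Lem:KLC} and~\ref{Lem:PsiPhi}. The only cosmetic difference is that the paper obtains the key identity by differentiating the factorization $\log\rho_{Y|X}=\log\rho_Y+\log\rho_{X|Y}-\log\rho_X$ twice in $y$, whereas you obtain the equivalent tower identity $\bar T=\E[T\mid Y]+C$ by differentiating $\rho_Y=\int\rho_X\rho_{Y|X}$; these are two phrasings of the same computation.
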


In particular, if $\rho_Y$ is log-concave, then $K(X;Y) \ge 0$.

\subsection{Mutual information along the heat flow}
\label{Sec:MutHeat}

Now consider when $Y = X_t$ is the heat flow from $X = X_0$.

By the linearity of the channel, the identities for the derivatives of entropy in Lemma~\ref{Lem:DerEnt} imply the following identities for the derivatives of mutual information along the heat flow.

\begin{lemma}\label{Lem:DerMut}
Along the heat flow $X_t = X_0 + \sqrt{t} Z$,
\begin{align*}
\frac{d}{dt} I(X_0;X_t) &= -\frac{1}{2} J(X_0;X_t) \\
\frac{d^2}{dt^2} I(X_0;X_t) &= \frac{1}{2} K(X_0;X_t).
\end{align*}
\end{lemma}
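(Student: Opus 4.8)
The plan is to write mutual information as the difference of the marginal and conditional entropies, $I(X_0;X_t) = H(X_t) - H(X_t \mid X_0)$, and to differentiate each term separately using Lemma~\ref{Lem:DerEnt}. The marginal $X_t \sim \rho_t$ evolves exactly by the heat flow, so Lemma~\ref{Lem:DerEnt} applies verbatim to give $\frac{d}{dt}H(X_t) = \frac{1}{2}J(X_t)$ and $\frac{d^2}{dt^2}H(X_t) = -\frac{1}{2}K(X_t)$. The substance of the argument is to handle the conditional entropy the same way.

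The key observation---this is what the \emph{linearity of the channel} refers to---is that each conditional distribution is itself a heat flow. Conditioning on $X_0 = x$ in $X_t = X_0 + \sqrt{t}\,Z$ leaves $X_t \mid (X_0 = x) = x + \sqrt{t}\,Z \sim \N(x,tI)$, which is precisely the heat flow started from the point mass $\delta_x$. Hence, for each fixed $x$, the conditional density $\rho_{X_t \mid X_0}(\cdot \mid x)$ solves the heat equation in its own argument, and Lemma~\ref{Lem:DerEnt} applies to it directly, giving $\frac{d}{dt}H(\rho_{X_t\mid X_0}(\cdot\mid x)) = \frac{1}{2}J(\rho_{X_t\mid X_0}(\cdot\mid x))$ and $\frac{d^2}{dt^2}H(\rho_{X_t\mid X_0}(\cdot\mid x)) = -\frac{1}{2}K(\rho_{X_t\mid X_0}(\cdot\mid x))$ for every $x$.

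I would then differentiate $H(X_t\mid X_0) = \int \rho_{X_0}(x)\,H(\rho_{X_t\mid X_0}(\cdot\mid x))\,dx$ by passing the time derivative inside the $x$-integral (the weight $\rho_{X_0}$ does not depend on $t$), which yields $\frac{d}{dt}H(X_t\mid X_0) = \frac{1}{2}J(X_t\mid X_0)$ and $\frac{d^2}{dt^2}H(X_t\mid X_0) = -\frac{1}{2}K(X_t\mid X_0)$ straight from the definitions of the conditional Fisher informations. Subtracting the two pieces then gives $\frac{d}{dt}I(X_0;X_t) = \frac{1}{2}\bigl(J(X_t) - J(X_t\mid X_0)\bigr) = -\frac{1}{2}J(X_0;X_t)$ and $\frac{d^2}{dt^2}I(X_0;X_t) = -\frac{1}{2}\bigl(K(X_t) - K(X_t\mid X_0)\bigr) = \frac{1}{2}K(X_0;X_t)$, exactly matching the definitions of $J(X_0;X_t)$ and $K(X_0;X_t)$. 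The one nonroutine point is justifying the interchange of $\frac{d}{dt}$ and $\int dx$; I expect this to be the main obstacle, to be settled by a dominated-convergence argument exploiting the smoothness and rapid decay of the heat kernel away from $t=0$, together with the integrability hypotheses used elsewhere in the paper.
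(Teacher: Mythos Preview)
Your proposal is correct and follows essentially the same approach as the paper: apply Lemma~\ref{Lem:DerEnt} to the marginal $X_t$ and, for each fixed $x_0$, to the conditional heat flow $X_t\mid(X_0=x_0)$, then average over $x_0$ and subtract, noting the interchange of expectation and time differentiation. The paper's proof is in fact terser than yours and does not elaborate on the dominated-convergence justification you flag.
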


Since $J(X_0;X_t) = \Phi(X_0\,|\,X_t) \ge 0$, the first identity above shows that mutual information is decreasing along the heat flow.
In fact along the heat flow $\Phi(X_0\,|\,X_t) = \frac{1}{t^2} \Var(X_0\,|\,X_t)$ is proportional to the mmse of estimating $X_0$ from $X_t$, thus recovering the I-MMSE relation for Gaussian channel~\cite{GuoEtAl05,WibisonoJL17}.
From the second identity above, we see that the convexity of mutual information along the heat flow is equivalent to the positivity of $K(X_0;X_t)$, for which Lemma~\ref{Lem:KJMut} will be useful.

Finally, we note that since $X_t \,|\, X_0$ is Gaussian, the various mutual quantities in Lemma~\ref{Lem:DerMut} are simply comparisons against a baseline Gaussian:
$I(X_0;X_t) = H(X_t) - \frac{n}{2} \log (2 \pi t e)$,
$$J(X_0;X_t) = \frac{n}{t} - J(X_t), ~~\text{ and }~~
K(X_0;X_t) = \frac{n}{t^2} - K(X_t).$$
In the opposite order, mutual information stays the same: $I(X_t;X_0) = I(X_0;X_t)$.
On the other hand,  
the mutual first and second-order Fisher information 
can be computed explicitly and do not depend on $X_t$:
$$J(X_t;X_0) = \frac{n}{t} ~~~~\text{ and }~~~~ K(X_t;X_0) = \frac{n}{t^2} + \frac{2}{t} J(X_0).$$
See Appendix~\ref{App:DetMutHeat} for detail.

\newpage
\section{Convexity of mutual information}

We present our main results on the convexity of mutual information along the heat flow.
Throughout, let $X_t = X_0 + \sqrt{t} Z$ denote the heat flow.

\subsection{Perpetual convexity when initial distribution is log-concave}

Recall from Lemma~\ref{Lem:KJMut} and~\ref{Lem:DerMut} that mutual information is convex whenever the final distribution is log-concave.
Since the heat flow preserves log-concavity, this implies mutual information is always convex when the initial distribution is log-concave.

\begin{theorem}\label{Thm:PerpConv}
If $X_0 \sim \rho_0$ has a log-concave distribution, then mutual information $t \mapsto I(X_0;X_t)$ is convex for all $t \ge 0$.
\end{theorem}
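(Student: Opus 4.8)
The plan is to reduce convexity to the positivity of the mutual second-order Fisher information and then invoke preservation of log-concavity under the heat flow. By Lemma~\ref{Lem:DerMut}, the second time derivative of mutual information along the heat flow is $\frac{d^2}{dt^2} I(X_0;X_t) = \frac{1}{2} K(X_0;X_t)$, so it suffices to show $K(X_0;X_t) \ge 0$ for all $t \ge 0$. Lemma~\ref{Lem:KJMut} already delivers exactly this, provided the marginal $X_t \sim \rho_t$ is log-concave: taking $\alpha = 0$ there gives $K(X_0;X_t) \ge \frac{1}{n} J(X_0;X_t)^2 \ge 0$. Thus the whole statement hinges on establishing that $\rho_t$ is log-concave whenever $\rho_0$ is.

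The key step---and the one I expect to be the crux---is the preservation of log-concavity along the heat flow. First I would recall the closed-form solution $\rho_t = \rho_0 \ast \gamma_t$, where the heat kernel $\gamma_t(x) = (2\pi t)^{-n/2} e^{-\|x\|^2/(2t)}$ is manifestly log-concave since $-\nabla^2 \log \gamma_t = \frac{1}{t} I \succeq 0$. The claim then follows from the fact that the convolution of two log-concave densities on $\R^n$ is again log-concave. This is a classical consequence of the Pr\'ekopa--Leindler inequality (a functional form of the Brunn--Minkowski inequality): writing $\rho_t(x) = \int_{\R^n} \rho_0(y)\,\gamma_t(x-y)\,dy$ and applying Pr\'ekopa's theorem on the marginalization of log-concave functions yields log-concavity of $\rho_t$. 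I do not expect to reprove this here, but rather cite it as the one nontrivial ingredient.

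With log-concavity of $\rho_t$ in hand, the proof closes immediately: $\rho_t$ is $0$-log-semiconcave, so Lemma~\ref{Lem:KJMut} gives $K(X_0;X_t) \ge 0$, and hence $\frac{d^2}{dt^2} I(X_0;X_t) \ge 0$ for all $t > 0$, which is convexity on $(0,\infty)$; the boundary case $t = 0$ then follows by continuity. The only subtlety worth flagging is regularity: even when $\rho_0$ is merely log-concave with a possibly singular density, for every $t > 0$ the convolution $\rho_t$ is smooth and strictly positive (as already observed in the setup), so the Hessian $\nabla^2 \log \rho_t$ and the quantities $K(X_0;X_t)$, $J(X_0;X_t)$ entering Lemma~\ref{Lem:KJMut} are all well-defined on $(0,\infty)$---which is precisely the range where the argument is applied.
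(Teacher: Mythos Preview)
Your argument is correct and matches the paper's own proof essentially verbatim: both invoke Pr\'ekopa--Leindler to show that the heat flow preserves log-concavity, and then combine Lemma~\ref{Lem:KJMut} (with $\alpha=0$) and Lemma~\ref{Lem:DerMut} to conclude $K(X_0;X_t)\ge 0$. Your additional remarks on regularity and the boundary $t=0$ are reasonable elaborations but not essential to the paper's treatment.
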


\subsection{Eventual convexity when initial distribution is bounded}

Next, we ask when the final distribution is eventually convex under the heat flow, which also implies the eventual convexity of mutual information.
We can show that if the initial distribution is bounded, then the final distribution is eventually log-concave; this fact has also been observed in~\cite{Lee03}.

We say a probability distribution $\rho$ is {\em $D$-bounded} for some $D \ge 0$ if it is supported on a domain of diameter at most $D$.

\begin{theorem}\label{Thm:EventConv}
If $X_0 \sim \rho_0$ has a $D$-bounded distribution, then mutual information $t \mapsto I(X_0;X_t)$ is convex for all $t \ge D^2$.
\end{theorem}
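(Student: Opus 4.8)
The plan is to reduce the convexity claim to the log-concavity of the final distribution $\rho_t$, and then to establish eventual log-concavity by a direct second-derivative computation. By Lemma~\ref{Lem:DerMut}, convexity of $t \mapsto I(X_0;X_t)$ is equivalent to the positivity of the mutual second-order Fisher information $K(X_0;X_t)$. By Lemma~\ref{Lem:KJMut}, if $\rho_t$ is log-concave (the case $\alpha \ge 0$), then $K(X_0;X_t) \ge 0$. Hence it suffices to prove that $\rho_t$ is log-concave for every $t \ge D^2$.

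To analyze log-concavity, I would write $\rho_t$ as the Gaussian convolution $\rho_t(y) = (2\pi t)^{-n/2}\int \rho_0(x)\, e^{-\|y-x\|^2/(2t)}\,dx$ and introduce the posterior density $p_y(x) \propto \rho_0(x)\, e^{-\|y-x\|^2/(2t)}$, i.e.\ the conditional law of $X_0$ given $X_t = y$. Differentiating the log-integral yields the score identity $\nabla \log \rho_t(y) = -\tfrac{1}{t}\big(y - \E_{p_y}[X_0]\big)$, and a second differentiation (noting $\nabla_y \log p_y(x) = \tfrac{1}{t}(x - \E_{p_y}[X_0])$) gives the Hessian identity
\begin{align*}
\nabla^2 \log \rho_t(y) = -\frac{1}{t} I + \frac{1}{t^2}\,\Cov_{p_y}(X_0),
\end{align*}
where $\Cov_{p_y}(X_0)$ is the covariance of $X_0$ under the posterior $p_y$. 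Consequently $-\nabla^2 \log \rho_t(y) \succeq 0$ if and only if $\Cov_{p_y}(X_0) \preceq t\, I$.

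The final step is to bound the posterior covariance using the bounded support. Since $\rho_0$ is $D$-bounded, the posterior $p_y$ is supported on the same set, of diameter at most $D$; fixing any point $x_0$ in that support, for every unit vector $u$ we have
\begin{align*}
u^\top \Cov_{p_y}(X_0)\, u = \Var_{p_y}(u^\top X_0) \le \E_{p_y}\big[(u^\top(X_0 - x_0))^2\big] \le \E_{p_y}\big[\|X_0 - x_0\|^2\big] \le D^2,
\end{align*}
using that variance is minimized at the mean and then Cauchy--Schwarz with $\|u\|=1$. Hence $\Cov_{p_y}(X_0) \preceq D^2 I$, and for all $t \ge D^2$ we obtain $-\nabla^2 \log \rho_t(y) \succeq \tfrac{1}{t}\big(1 - D^2/t\big) I \succeq 0$ uniformly in $y$, i.e.\ $\rho_t$ is log-concave. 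Combined with the reduction above, this gives convexity for all $t \ge D^2$.

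I expect the main obstacle to be the Hessian identity: justifying differentiation under the integral sign (legitimate since the Gaussian kernel decays fast and $\rho_0$ has compact support) and organizing the second derivative into the clean posterior-covariance form requires care, as it is essentially a Tweedie/Brascamp--Lieb type computation for the Gaussian channel. By contrast, the reduction via Lemmas~\ref{Lem:KJMut} and~\ref{Lem:DerMut}, and the covariance bound from bounded support, are routine once this identity is in place.
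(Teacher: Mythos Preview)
Your proposal is correct and follows essentially the same route as the paper: reduce to log-concavity of $\rho_t$ via Lemmas~\ref{Lem:KJMut} and~\ref{Lem:DerMut}, establish the Hessian identity $-\nabla^2 \log \rho_t(y) = \tfrac{1}{t}\big(I - \tfrac{1}{t}\Cov(\rho_{0|t}(\cdot\,|\,y))\big)$ (the paper's Lemma~\ref{Lem:HesHeat}), and then bound the posterior covariance by $D^2 I$ from the bounded support (the paper's Lemma~\ref{Lem:BddLC}). The only cosmetic difference is that the paper derives the Hessian identity by first computing $\nabla^2_y \log \rho_{0|t}$ via an exponential-family argument and then using the factorization $\rho_t\rho_{0|t}=\rho_0\rho_{t|0}$, whereas you differentiate the convolution integral directly.
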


Since convolution with log-concave distribution preserves log-concavity, we also have the following corollary.
Note that when the bounded part is a point mass (with diameter $D = 0$) this recovers
Theorem~\ref{Thm:PerpConv} above.

\begin{corollary}\label{Cor:EventConv}
If $X_0 \sim \rho_0$ is a convolution of a $D$-bounded and a log-concave distribution, then mutual information $t \mapsto I(X_0;X_t)$ is convex for all $t \ge D^2$.
\end{corollary}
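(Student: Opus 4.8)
The plan is to reduce the corollary to the log-concavity criterion already in hand. Recall that convexity of $t \mapsto I(X_0;X_t)$ is equivalent to $K(X_0;X_t) \ge 0$ by Lemma~\ref{Lem:DerMut}, and that Lemma~\ref{Lem:KJMut} with $\alpha \ge 0$ guarantees $K(X_0;X_t) \ge 0$ whenever the final density $\rho_t$ is log-concave. So it suffices to show that under the hypothesis, $\rho_t$ is log-concave for all $t \ge D^2$, and then the convexity conclusion is immediate.

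Write $X_0 = W + V$ where $W$ has the $D$-bounded distribution $\mu$ and $V$ has the log-concave distribution $\nu$, with $W$ and $V$ independent. Along the heat flow $X_t = W + V + \sqrt{t}\,Z$, so its density factors as $\rho_t = \mu \ast \nu \ast \gamma_t$. The key step is to regroup this convolution as $\rho_t = (\mu \ast \gamma_t) \ast \nu$, peeling off the log-concave factor $\nu$ and isolating the bounded part together with the Gaussian smoothing. Now $\mu \ast \gamma_t$ is precisely the density of $W + \sqrt{t}\,Z$, the heat flow applied to the bounded component alone, so I would invoke the eventual log-concavity fact underlying Theorem~\ref{Thm:EventConv} (also noted in~\cite{Lee03}): the convolution of a $D$-bounded distribution with the heat kernel $\gamma_t$ is log-concave for every $t \ge D^2$. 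Finally, since $\nu$ is log-concave and convolution with a log-concave distribution preserves log-concavity (the fact stated just before the corollary), the product $(\mu \ast \gamma_t) \ast \nu = \rho_t$ is log-concave for all $t \ge D^2$, completing the argument. Note the threshold is exactly $D^2$ because the bounded factor requires $t \ge D^2$ while the log-concave factor imposes no additional waiting time.

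The only point requiring care is that the argument needs the intermediate statement that $\mu \ast \gamma_t$ itself becomes log-concave, rather than merely the conclusion of Theorem~\ref{Thm:EventConv} about mutual information; this stronger statement is exactly what the proof of that theorem supplies. I do not expect a genuine obstacle here: the substantive work lives in the eventual log-concavity fact already established for the bounded case, and the corollary is essentially a bookkeeping argument combining the associativity of convolution with the closure of log-concavity under convolution.
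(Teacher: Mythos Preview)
Your proposal is correct and follows essentially the same route as the paper: decompose $X_0$ into its bounded and log-concave independent summands, regroup so the heat kernel acts on the bounded part, invoke the eventual log-concavity of that piece (the paper's Lemma~\ref{Lem:BddLC}, which is exactly the intermediate statement you flag), and then close up under convolution via Pr\'ekopa--Leindler before applying Lemmas~\ref{Lem:KJMut} and~\ref{Lem:DerMut}. The only cosmetic difference is that the paper phrases the regrouping in terms of random variables ($X_t = B_t + C$) rather than densities ($\rho_t = (\mu \ast \gamma_t) \ast \nu$), but the argument is identical.
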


For example, if $X_0 \sim \sum_{i=1}^k p_i \N(a_i, \Sigma)$ is a mixture of Gaussians with the same covariance, then the bounded part $\sum_{i=1}^k p_i \delta_{a_i}$ has diameter $D = \max_{i \neq j} \|a_i-a_j\|$.

\subsection{Eventual convexity when Fisher information is finite}

We now investigate when mutual information is eventually convex in general, 
regardless of the log-concavity of the distributions.
We show that if the initial distribution has finite fourth moment and Fisher information, then mutual information is eventually convex.

For $p \ge 0$, let $M_p(X) = \E[\|X-\mu\|^p]$ denote the $p$-th moment of a random variable $X$ with mean $\E[X] = \mu \in \R^n$.

\begin{theorem}\label{Thm:EventConvFI}
If $X_0 \sim \rho_0$ has finite fourth moment $M_4(X_0) < \infty$ and Fisher information $J(X_0) < \infty$, then mutual information $t \mapsto I(X_0;X_t)$ is convex for all $t \ge \frac{1}{n^2} J(X_0)M_4(X_0)$.
\end{theorem}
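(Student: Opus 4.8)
The plan is to reduce the claim to a pointwise analysis of the conditional covariance and then combine an mmse lower bound with a fourth-moment upper bound. By Lemma~\ref{Lem:DerMut}, convexity of $t\mapsto I(X_0;X_t)$ on $[t_0,\infty)$ with $t_0:=\frac{1}{n^2}J(X_0)M_4(X_0)$ is equivalent to $K(X_0;X_t)\ge 0$ there, i.e. to $K(X_t)\le n/t^2$. First I would record the conditional-covariance representation of the score Hessian. Writing $m(x)=\E[X_0\mid X_t=x]$ and $C(x)=\Cov(X_0\mid X_t=x)$, Tweedie's formula gives $\nabla\log\rho_t(x)=\frac1t(m(x)-x)$ and $\nabla^2\log\rho_t(x)=\frac1{t^2}C(x)-\frac1t I$ (the content of Appendix~\ref{App:DetMutHeat}). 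Expanding the Hilbert--Schmidt norm then yields
\begin{align*}
K(X_0;X_t)=\frac{n}{t^2}-K(X_t)=\frac{2}{t^3}\,\E[\Tr C(X_t)]-\frac{1}{t^4}\,\E[\|C(X_t)\|_{\HS}^2],
\end{align*}
and since $\E[\Tr C(X_t)]=t^2 J(X_0;X_t)$ (the identity $\Phi(X_0\mid X_t)=\frac{1}{t^2}\Var(X_0\mid X_t)$ quoted in Section~\ref{Sec:MutHeat}), it suffices to prove
\begin{align*}
\E[\|C(X_t)\|_{\HS}^2]\le 2t\,\E[\Tr C(X_t)]\qquad\text{for all } t\ge t_0.
\end{align*}

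For the upper bound I would use that $C(X_t)\succeq0$, so $\|C\|_{\HS}^2=\sum_i\lambda_i(C)^2\le(\Tr C)^2$, together with two applications of Jensen: $\Tr C(x)=\Tr\Cov(X_0\mid X_t=x)\le\E[\|X_0-\mu\|^2\mid X_t=x]$ (covariance is minimized at the mean), and then $(\Tr C(x))^2\le\E[\|X_0-\mu\|^4\mid X_t=x]$. Taking expectation over $X_t$ collapses the conditioning and gives $\E[\|C(X_t)\|_{\HS}^2]\le M_4(X_0)$. For the lower bound I would avoid any external Fisher-information inequality and instead run the heat-flow ODE: differentiating the first identity in Lemma~\ref{Lem:DerEnt} gives $\frac{d}{dt}J(X_t)=-K(X_t)$, while~\eqref{Eq:KJ} gives $K(X_t)\ge J(X_t)^2/n$, so $\frac{d}{dt}\big(1/J(X_t)\big)\ge 1/n$ and hence $J(X_t)\le \frac{nJ(X_0)}{n+tJ(X_0)}$; therefore $J(X_0;X_t)=\frac nt-J(X_t)\ge\frac{n^2}{t(n+tJ(X_0))}$, i.e. $\E[\Tr C(X_t)]\ge\frac{n^2 t}{n+tJ(X_0)}$.

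It then remains to combine the two estimates: the desired inequality follows once $M_4(X_0)\le \frac{2n^2t^2}{n+tJ(X_0)}$, i.e. once the upward parabola $g(t)=2n^2t^2-J(X_0)M_4(X_0)\,t-nM_4(X_0)$ is nonnegative. A short computation shows $g\ge0$ on $[t_0,\infty)$ precisely when $J(X_0)^2M_4(X_0)\ge n^3$, and this holds because the Cram\'er--Rao bound gives $J(X_0)M_2(X_0)\ge n^2$ (take traces in $\mathcal J\succeq\Cov(X_0)^{-1}$ and use $\Tr(A)\Tr(A^{-1})\ge n^2$), while Jensen gives $M_4(X_0)\ge M_2(X_0)^2$, so $J(X_0)^2M_4(X_0)\ge(J(X_0)M_2(X_0))^2\ge n^4\ge n^3$. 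I expect the main obstacle to be analytic rather than algebraic: making the representation $\nabla^2\log\rho_t=\frac1{t^2}C-\frac1t I$ and the resulting moment identities fully rigorous (interchanging differentiation and integration, controlling the tails) under only the hypotheses $M_4(X_0)<\infty$ and $J(X_0)<\infty$, which is exactly where those two finiteness assumptions are consumed. The remaining steps are the clean pairing of an mmse lower bound with a fourth-moment upper bound, which is what pins down the explicit threshold $t_0=J(X_0)M_4(X_0)/n^2$.
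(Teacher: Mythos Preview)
Your proof is correct and follows the same architecture as the paper: represent $K(X_0;X_t)$ via the conditional covariance (this is the paper's Lemma~\ref{Lem:KHeat}; the Hessian identity you attribute to Appendix~\ref{App:DetMutHeat} is actually Lemma~\ref{Lem:HesHeat}), bound $\E[\|C(X_t)\|_{\HS}^2]\le M_4(X_0)$ by the identical Jensen argument (Lemma~\ref{Lem:CovM4}), lower-bound $\E[\Tr C(X_t)]$, and then analyze the same quadratic $2n^2t^2-J(X_0)M_4(X_0)t-nM_4(X_0)$ using $J(X_0)^2M_4(X_0)\ge n^4$.

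The one place you genuinely diverge is in the lower bound on $\Var(X_0\mid X_t)$. The paper (Lemma~\ref{Lem:VarJ}) applies the uncertainty relation $J\cdot\Var\ge n^2$ pointwise to each conditional $\rho_{0|t}(\cdot\mid y)$, averages via Jensen, and then invokes the identity $J(X_0\mid X_t)=J(X_0)+n/t$ from~\eqref{Eq:JXYX}. You instead integrate the differential inequality $\frac{d}{dt}J(X_t)=-K(X_t)\le -J(X_t)^2/n$ along the flow to obtain $J(X_t)\le nJ(X_0)/(n+tJ(X_0))$ and then use $J(X_0;X_t)=n/t-J(X_t)$. The two routes produce the \emph{identical} bound $\Var(X_0\mid X_t)\ge n^2t/(n+tJ(X_0))$. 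Your dynamic argument is pleasant in that it stays entirely on the ``forward'' side, using only Lemma~\ref{Lem:DerEnt} and inequality~\eqref{Eq:KJ}; the paper's static argument, on the other hand, makes the role of the backward Fisher information $J(X_0\mid X_t)$ explicit and does not require integrating an ODE from $t=0$ (so it would still apply if one only knew $J(X_s)<\infty$ for some $s>0$). Either way the threshold $t_0=J(X_0)M_4(X_0)/n^2$ falls out the same.
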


Thus, we see that under a wide variety of conditions, mutual information is eventually convex along the heat flow. 
However, it turns out mutual information is {\em not} always convex along the heat flow, in contrast to the concavity of entropy or entropy power along the heat flow.

\section{Nonconvexity of mutual information}
\label{Sec:NonConv}

We present some counterexamples for which mutual information along the heat flow is not convex at some small time.
Concretely, we study mixtures of point masses and mixtures of Gaussians as initial distribution of the heat flow.

\subsection{Mixture of two point masses}
\label{Sec:MixtPoint}

Let 
$X_0 \sim \frac{1}{2} \delta_{-a} + \frac{1}{2} \delta_a$
be a uniform mixture of two point masses centered at $a$ and $-a$, for some $a \in \R^n$, $a \neq 0$.
Along the heat flow, 
$X_t \sim \frac{1}{2} \N(-a,tI) + \frac{1}{2} \N(a,tI)$
is a uniform mixture of two Gaussians with equal covariance $tI$.

For $u > 0$, let 
$$V_u = \N(u,u) \in \R$$
denote the one-dimensional Gaussian random variable with mean and variance both equal to $u$.
Then by direct calculation:
\begin{align*}
I(X_0;X_t) &= \frac{\|a\|^2}{t} - \E[\log \cosh(V_{\frac{\|a\|^2}{t}})] \\
J(X_0;X_t) &= \frac{\|a\|^2}{t^2} \E[\sech^2(V_{\frac{\|a\|^2}{t}})] \\
K(X_0;X_t) &= \frac{2\|a\|^2}{t^3} \E[\sech^2(V_{\frac{\|a\|^2}{t}})] \! - \! \frac{\|a\|^4}{t^4} \E[\sech^4(V_{\frac{\|a\|^2}{t}})].
\end{align*}
Note the dependence on dimension is only implicit via $\|a\|^2$.

The behavior of these quantities is illustrated in Figure~\ref{Fig:MixtPoint}.
Mutual information is not convex at small time since it starts at some finite value (in fact $\log 2$), and stays flat for a while before decreasing.
Its second derivative, the mutual second-order Fisher information, starts at $0$ and becomes negative before eventually becoming positive.
Thus, mutual information is concave for all small time.
Furthermore, by scaling $\|a\|^2$ we can stretch the region of nonconvexity to cover any finite time interval.

\subsection{Mixture of two Gaussians}
\label{Sec:MixtGaus}

Let $X_0 \sim \frac{1}{2} \N(-a,sI) + \frac{1}{2} \N(a,sI)$ be a uniform mixture of two Gaussians with the same covariance $sI$ for some $s > 0$, centered at $-a$ and $a$ for some $a \in \R^n$, $a \neq 0$.
Note, the limit $s \to 0$ recovers the mixture of two point masses above.
Along the heat flow, $X_t \sim \frac{1}{2} N(-a,(s+t)I) + \frac{1}{2} \N(a,(s+t)I)$ is also a mixture of two Gaussians with increasing covariance.

Then with $V_u = \N(u,u)$ as above, we have:
\begin{align*}
I(X_0;X_t) &= \frac{n}{2} \log\left(1+\frac{s}{t}\right) + \frac{\|a\|^2}{s+t} - \E[\log \cosh(V_{\frac{\|a\|^2}{s+t}})] \\
J(X_0;X_t) &= \frac{ns}{t(s+t)} + \frac{\|a\|^2}{(s+t)^2} \E[\sech^2(V_{\frac{\|a\|^2}{s+t}})] \\
K(X_0;X_t) &= \frac{ns(s+2t)}{t^2(s+t)^2} + \frac{2\|a\|^2}{(s+t)^3} \E[\sech^2(V_{\frac{\|a\|^2}{s+t}})] \\
&~~~~ - \frac{\|a\|^4}{(s+t)^4} \E[\sech^4(V_{\frac{\|a\|^2}{s+t}})].
\end{align*}
Note the explicit dependence on the dimension $n$.

The behavior of these quantities is illustrated in Figure~\ref{Fig:MixtGaus} for $n=1$.
Mutual information initially starts at $+\infty$, but it decreases quickly and exhibits a similar pattern of nonconvexity as the mixture of point masses.
Its second derivative, the mutual second-order Fisher information, also starts at $+\infty$, but decreases quickly and becomes negative for some time before eventually becoming positive.
Thus, mutual information is concave at some small time, and by scaling $\|a\|^2$ we can enlarge the region of nonconvexity.

\begin{figure}
    \centering
    \begin{subfigure}[b]{0.2311\textwidth}
        \includegraphics[width=\textwidth]{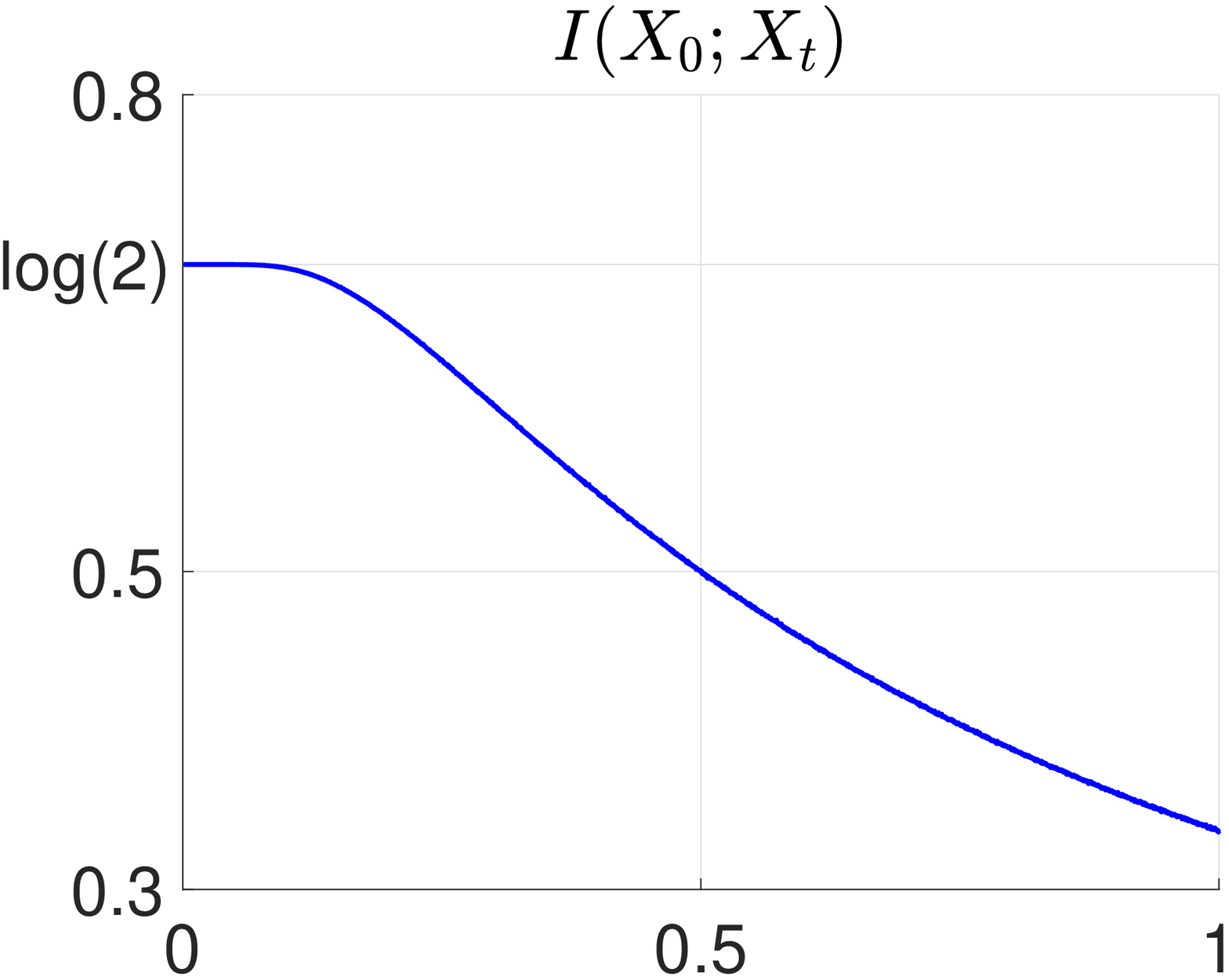}
    \end{subfigure}
      \; 
    \begin{subfigure}[b]{0.2311\textwidth}
        \includegraphics[width=\textwidth]{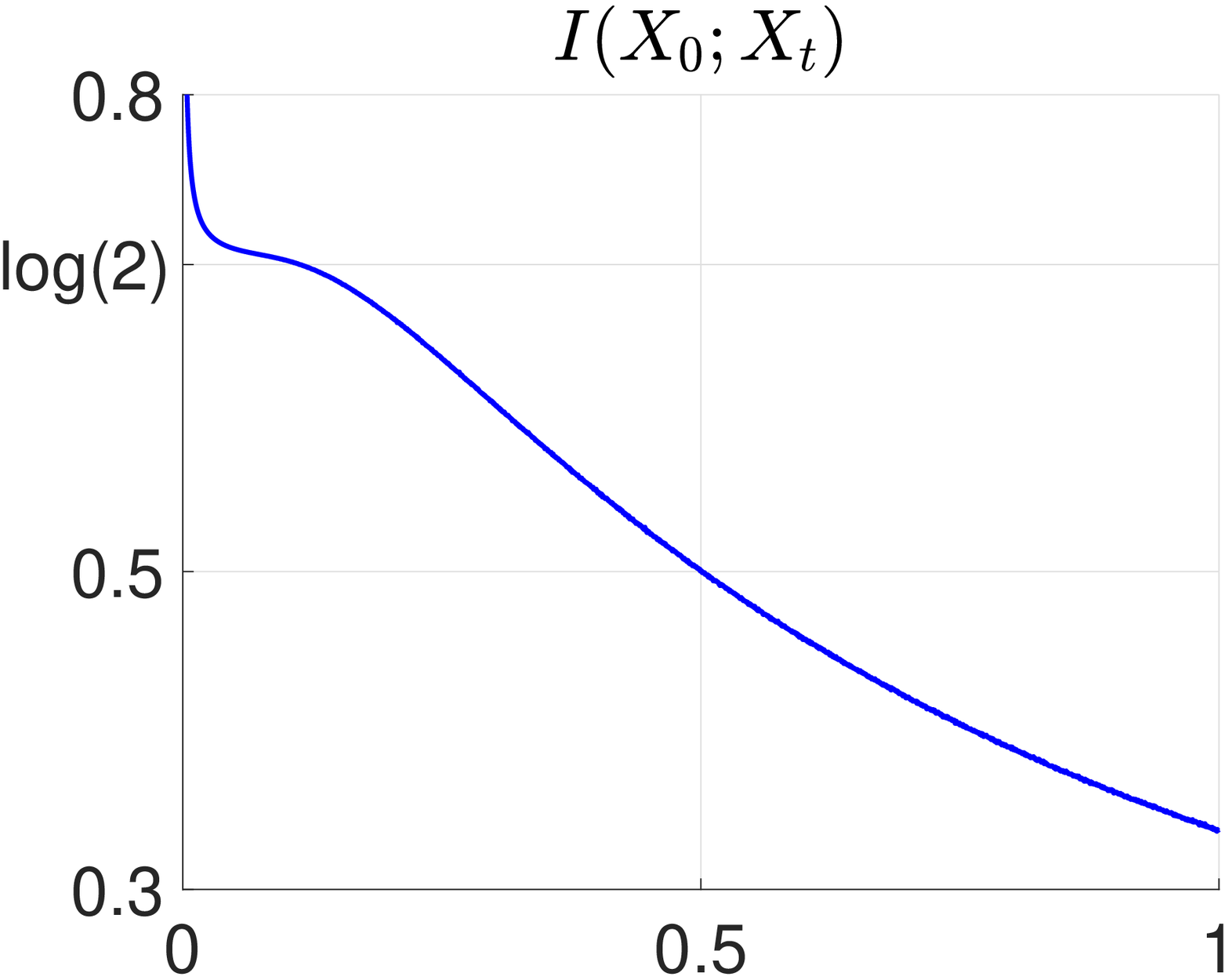}
    \end{subfigure} 

    \vspace{10pt}
    
    \begin{subfigure}[b]{0.22\textwidth}
        \includegraphics[width=\textwidth]{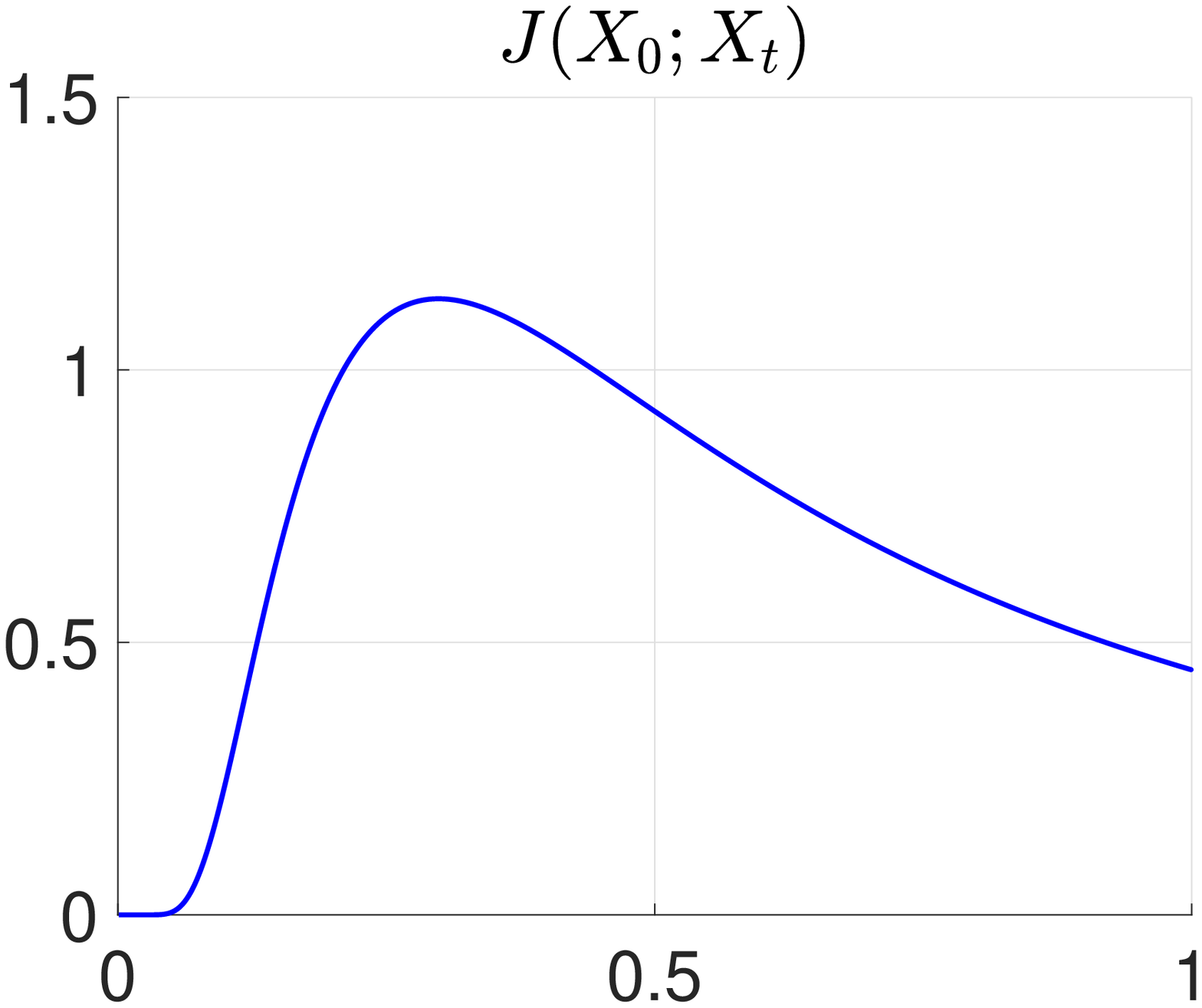}
    \end{subfigure}
     ~~ 
    \begin{subfigure}[b]{0.22\textwidth}
        \includegraphics[width=\textwidth]{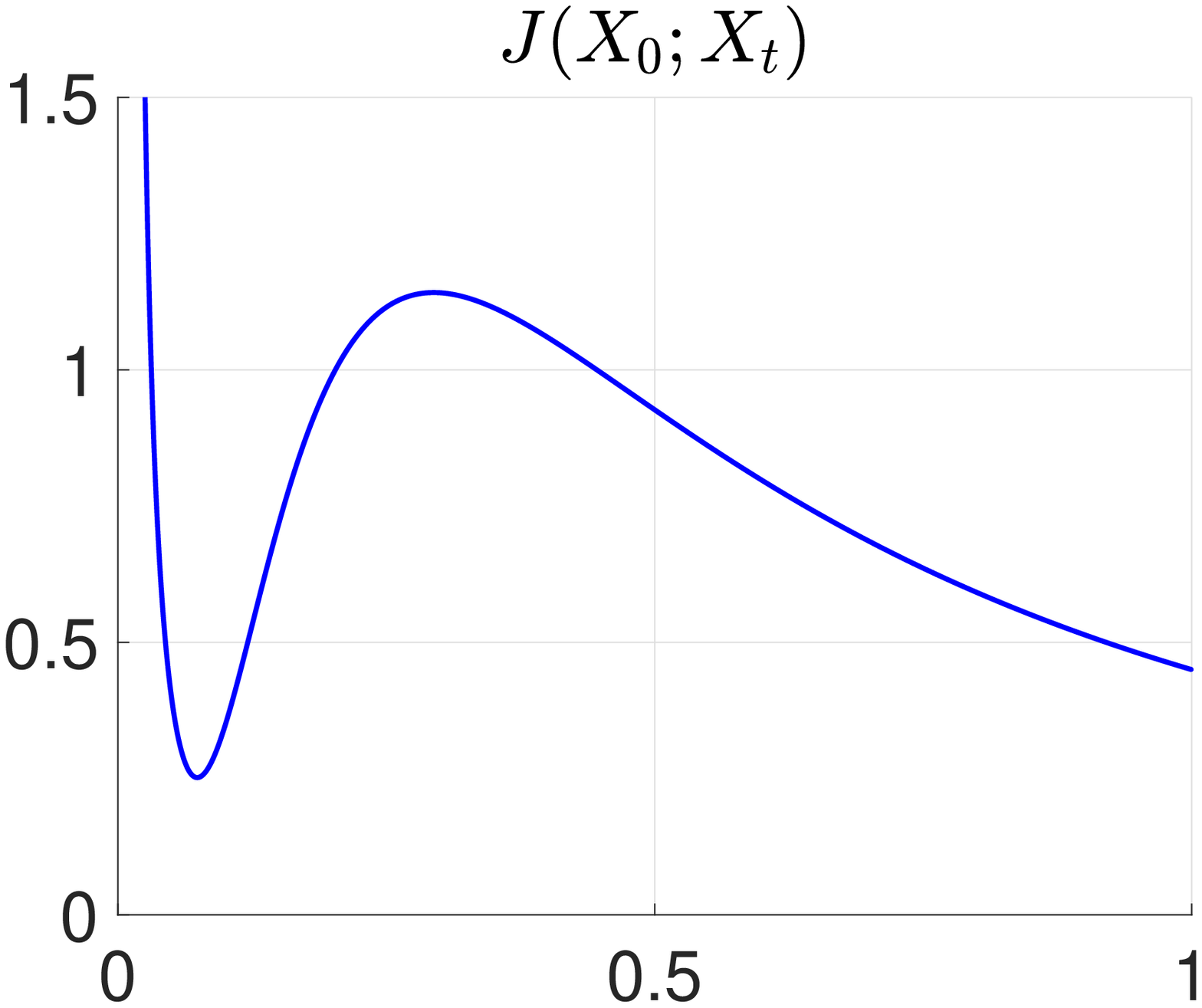}
    \end{subfigure}

    \vspace{10pt}
    
    \begin{subfigure}[b]{0.22\textwidth}
        \includegraphics[width=\textwidth]{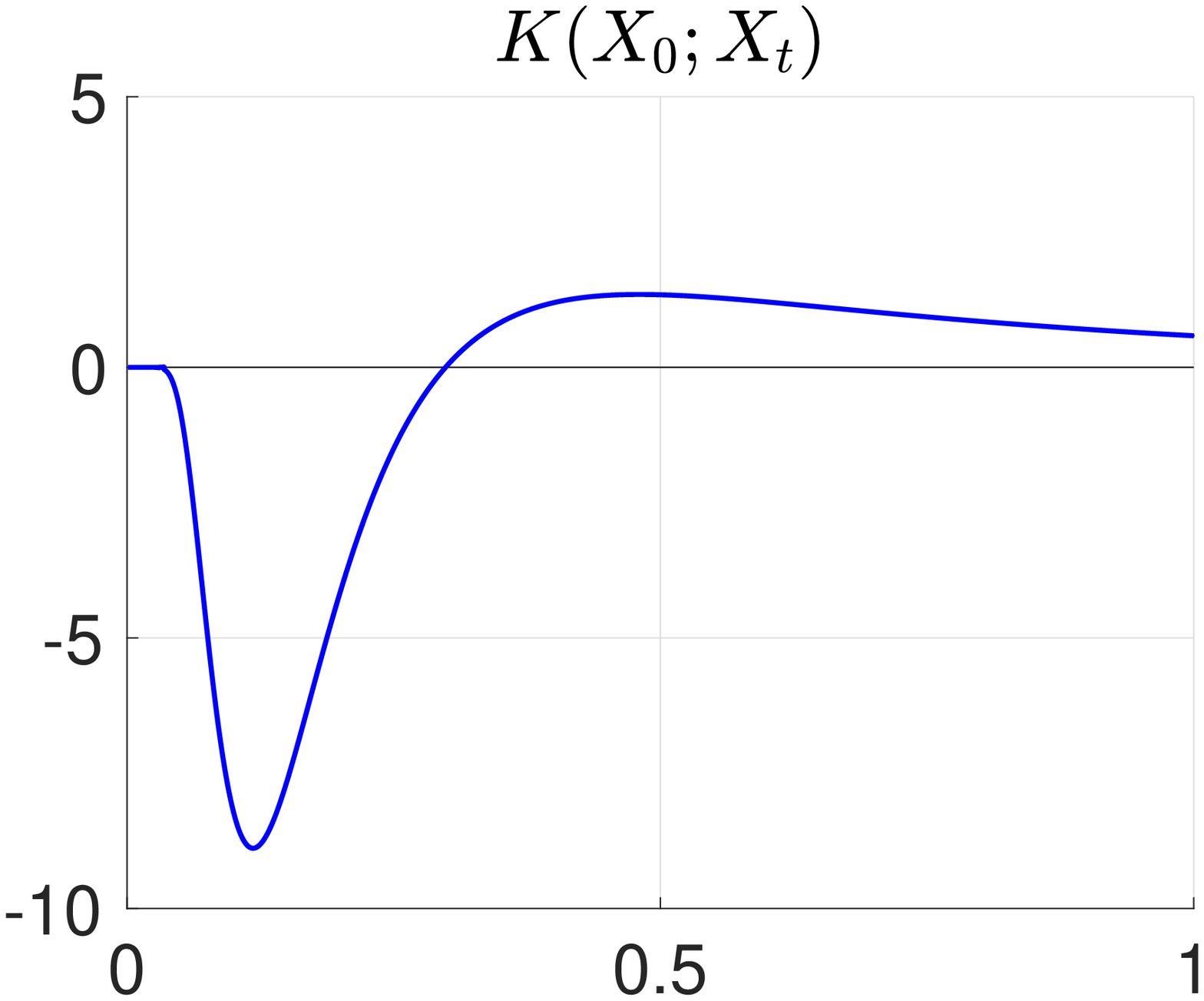}
        \caption{Mixture of point masses}
        \label{Fig:MixtPoint}
    \end{subfigure} 
      ~~ 
    \begin{subfigure}[b]{0.22\textwidth}
        \includegraphics[width=\textwidth]{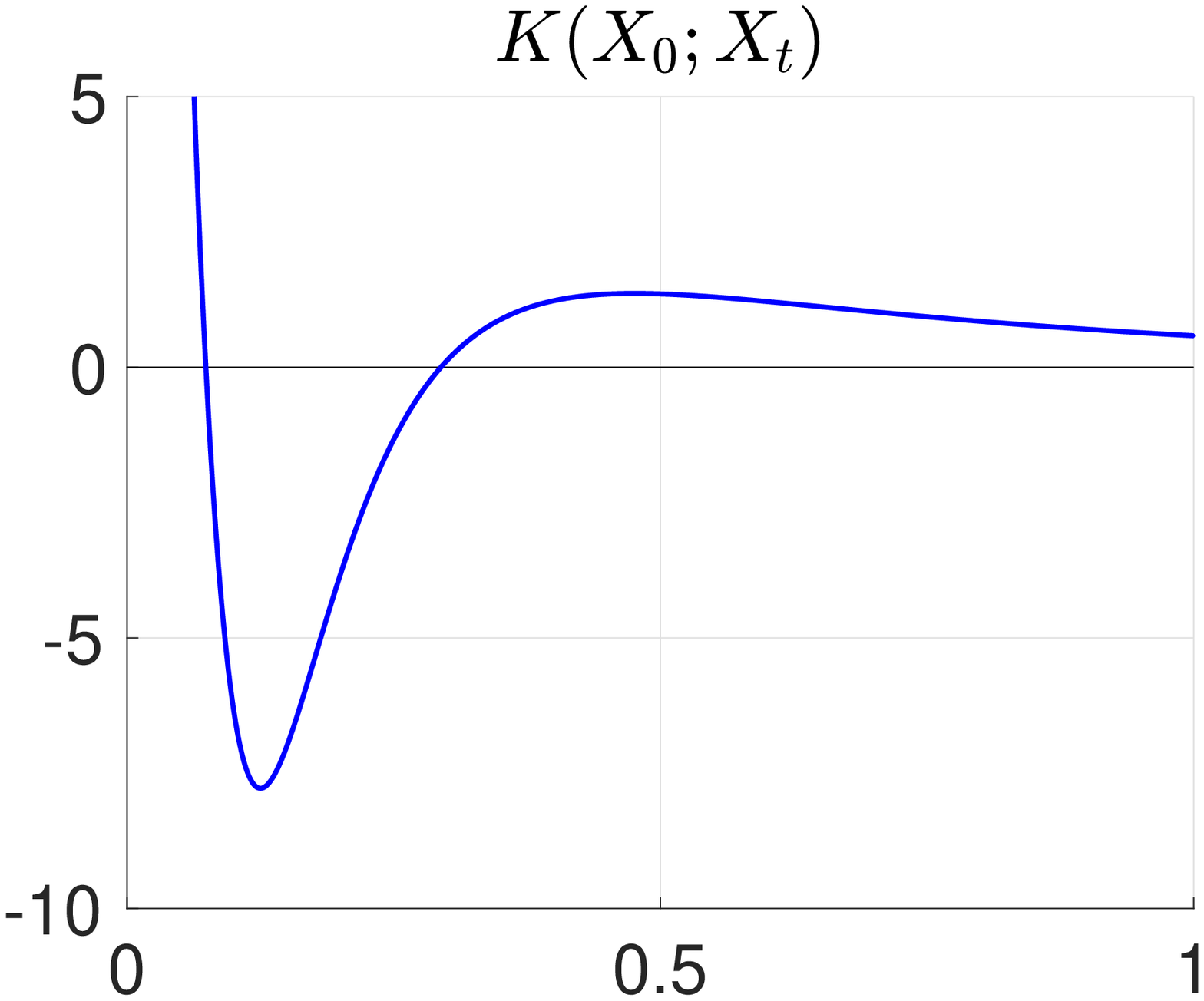}
        \caption{Mixture of Gaussians}
        \label{Fig:MixtGaus}
    \end{subfigure} 

    \caption{Behavior of mutual information and its two derivatives along the heat flow.
    (a) Left: $X_0 \sim \frac{1}{2} \delta_{-1} + \frac{1}{2} \delta_1$.
    (b) Right: $X_0 \sim \frac{1}{2} \N(-1,s) + \frac{1}{2} \N(1,s)$
     with $s = 10^{-3}$.}
    \label{Fig:Mixt}
\end{figure}

\subsection{General mixture of point masses}

Let $X_0 \sim \sum_{i=1}^k p_i \delta_{a_i}$ be a mixture of point masses centered at distinct $a_i \in \R^n$, with mixture probabilities $p_i > 0$, $\sum_{i=1}^k p_i = 1$.
Along the heat flow,
$X_t \sim \sum_{i=1}^k p_i \N(a_i,tI)$ is a mixture of Gaussians with increasing covariance $tI$ at the same centers.

We show that mutual information starts at a finite value which is equal to the discrete entropy of the mixture probability, and it is exponentially concentrated at small time.

Let $\|p\|_\infty = \max_{i,j} p_i/p_j$ and $m = \min_{i \neq j} \|a_i-a_j\| > 0$.
Let $h(p) = -\sum_{i=1}^k p_i \log p_i$ denote the discrete entropy.

\begin{theorem}\label{Thm:GenMixt}
For all $0 < t \le \frac{m^2}{676\|p\|_\infty^2}$,
$$0 \,\le\, h(p) - I(X_0;X_t) \,\le\, 3(k-1)\|p\|_\infty e^{-0.085 \frac{m^2}{t}}.$$
\end{theorem}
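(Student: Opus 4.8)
The plan is to identify the gap as a conditional entropy and control it by a one–dimensional Gaussian tail estimate. Since $X_0$ is discrete with probabilities $p$, its entropy is $H(X_0)=h(p)$ and $I(X_0;X_t)=h(p)-H(X_0\mid X_t)$; hence $h(p)-I(X_0;X_t)=H(X_0\mid X_t)\ge 0$, which gives the lower bound immediately. For the upper bound I would write the posterior weights $q_i(y)=P(X_0=a_i\mid X_t=y)=p_i\,\gamma_t(y-a_i)/\sum_l p_l\,\gamma_t(y-a_l)$ and use the representation
\[
H(X_0\mid X_t)=\E[-\log q_{X_0}(X_t)]=\sum_i p_i\,\E[-\log q_i(X_t)\mid X_0=a_i],
\]
so the whole problem reduces to bounding the expected surprise of the true label at each center.

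Conditioned on $X_0=a_i$, so that $X_t=a_i+\sqrt{t}\,Z$ with $Z\sim\N(0,I)$, the surprise is
\[
-\log q_i(a_i+\sqrt{t}\,Z)=\log\!\Big(1+\sum_{l\neq i}\frac{p_l}{p_i}\,e^{-\frac{d_{il}^2}{2t}}\,e^{-\frac{d_{il}W_{il}}{\sqrt{t}}}\Big),
\]
where $d_{il}=\|a_i-a_l\|\ge m$ and $W_{il}=\langle a_i-a_l,Z\rangle/d_{il}\sim\N(0,1)$. The crucial point is that this depends on $Z$ only through the one–dimensional projections $W_{il}$, so the estimate will be \emph{dimension-free}, the dimension entering only through the separations $d_{il}$. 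Applying the elementary inequality $\log(1+\sum_l x_l)\le\sum_l\log(1+x_l)$ for $x_l\ge 0$ decouples the sum into $k-1$ scalar integrals,
\[
\E\big[-\log q_i(X_t)\mid X_0=a_i\big]\le\sum_{l\neq i}\E\big[\log(1+c_{il}\,e^{-d_{il}W_{il}/\sqrt{t}})\big],\qquad c_{il}=\tfrac{p_l}{p_i}e^{-d_{il}^2/(2t)},
\]
each over a single standard Gaussian.

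I would bound each scalar integral by splitting the line at $W_{il}=-\delta$. On $\{W_{il}\ge-\delta\}$ the bound $\log(1+x)\le x$ gives at most $\frac{p_l}{p_i}\exp\!\big(-\frac{d_{il}^2}{2t}+\frac{d_{il}\delta}{\sqrt{t}}\big)$; on the left tail $\{W_{il}<-\delta\}$, of probability at most $e^{-\delta^2/2}$, the logarithm is at most an affine function of $|W_{il}|$ (with a constant involving $\log(p_l/p_i)$), integrated against the Gaussian tail. Choosing $\delta=(\sqrt{2}-1)d_{il}/\sqrt{t}$ equalizes the two exponential rates: the good-region rate $\frac{d_{il}^2}{2t}-\frac{d_{il}\delta}{\sqrt{t}}$ and the tail rate $\frac{\delta^2}{2}$ both equal $\frac{3-2\sqrt{2}}{2}\cdot\frac{d_{il}^2}{t}\ge\frac{3-2\sqrt{2}}{2}\cdot\frac{m^2}{t}\approx 0.0858\,\frac{m^2}{t}$, which is exactly the stated exponent. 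Summing over the $k-1$ indices $l$, bounding the ratios $p_l/p_i\le\|p\|_\infty$, and averaging with $\sum_i p_i(\cdots)=(\cdots)$ then produces the prefactor $(k-1)\|p\|_\infty$.

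The main obstacle is the bookkeeping in the left-tail term: it carries polynomial prefactors in $u:=d_{il}^2/t$ (from the $\E[|W_{il}|\,\mathbf{1}\{W_{il}<-\delta\}]$ and $\log(1/p_i)$ contributions) together with assorted constants, and all of these must be absorbed into the single clean constant $3$ without pushing the exponent below $0.085$. This is precisely what the hypothesis $t\le m^2/(676\,\|p\|_\infty^2)$ is engineered to deliver: it forces $u\ge m^2/t\ge 676\,\|p\|_\infty^2$, so any factor $u^a e^{-cu}$ is dominated by $e^{-c'u}$ with $c'$ only marginally below $c$, the small sacrificed slack accounting for the gap between the ideal rate $\frac{3-2\sqrt{2}}{2}\approx 0.0858$ and the reported $0.085$. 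One also checks that the per-pair estimates are uniform, which holds because the good-region rate is decreasing in $d_{il}$, so the closest pair $d_{il}=m$ is the worst case.
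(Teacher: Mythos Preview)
Your proposal is correct and follows essentially the same route as the paper: both reduce $h(p)-I(X_0;X_t)$ to $\sum_i p_i\,\E\big[\log\big(1+\sum_{j\neq i}\tfrac{p_j}{p_i}e^{c_{ij}Z_1-c_{ij}^2/2}\big)\big]$ with $c_{ij}=\|a_i-a_j\|/\sqrt{t}$, decouple via $\log(1+\sum x_j)\le\sum\log(1+x_j)$, and bound each one-dimensional term by splitting the Gaussian line so as to balance the two exponential rates, yielding the optimal exponent $\tfrac{3-2\sqrt{2}}{2}\approx 0.0858>0.085$. The only cosmetic differences are that you start from the cleaner identification $h(p)-I=H(X_0\mid X_t)$ (the paper expands $H(X_t)$ directly) and use a two-region split, whereas the paper further subdivides your tail region into two pieces (its Lemma~\ref{Lem:Log2}) to isolate the linear-in-$Z$ growth; both organizations lead to the same constants, with the condition $c\ge\max\{1,26/b\}$ in the paper's lemma being exactly what your hypothesis $t\le m^2/(676\|p\|_\infty^2)$ encodes.
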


The theorem above implies that
$$\lim_{t \to 0} I(X_0;X_t) = h(p).$$
In particular, the initial value of mutual information does not depend on the locations of the centers, as long as they are distinct.
This is interesting, because by moving the centers and merging them we can obtain discontinuities of the mutual information with respect to the initial random variable at the origin
(moving the centers changes the mutual information curve but preserves the starting point, while merging the centers makes the starting point jump).

Furthermore, 
if a function converges exponentially fast, then all its derivatives must converge to zero exponentially fast.
Thus, we have the following corollary.

\begin{corollary}\label{Cor:Last}
For all $\ell \in \mathbb{N}$, $\lim_{t \to 0} \frac{d^\ell}{dt^\ell} I(X_0;X_t) = 0$. 
\end{corollary}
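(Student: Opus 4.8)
The plan is to deduce the corollary from the exponential bound of Theorem~\ref{Thm:GenMixt} by a Cauchy-estimate argument. Write $g(t) = h(p) - I(X_0;X_t)$; since $h(p)$ is constant, for every $\ell \ge 1$ we have $\frac{d^\ell}{dt^\ell} I(X_0;X_t) = -g^{(\ell)}(t)$, so it suffices to show $g^{(\ell)}(t) \to 0$ as $t \to 0^+$ (for $\ell = 0$ the statement would be false, since $I(X_0;X_t) \to h(p) \neq 0$, so $\mathbb{N}$ here means the positive integers). Theorem~\ref{Thm:GenMixt} gives $0 \le g(t) \le C e^{-c/t}$ on $(0, t^*]$ with $C = 3(k-1)\|p\|_\infty$, $c = 0.085\,m^2$, and $t^* = m^2/(676\|p\|_\infty^2)$. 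The heuristic ``exponentially small $\Rightarrow$ all derivatives exponentially small'' is \emph{not} valid for arbitrary smooth functions, so the real content is to exploit the analytic structure of the heat flow.

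First I would record that $g$ is real-analytic on $(0,\infty)$: along the heat flow the density $\rho_t(x) = \sum_i p_i \gamma_t(x-a_i)$ is a finite sum of Gaussians, jointly real-analytic in $(x,t)$ for $t > 0$, and the entropy integral $H(X_t)$ inherits this analyticity. I would then extend $g$ to a holomorphic function on a complex neighborhood of each $t_0 \in (0,t^*]$---concretely on the disk $D(t_0, t_0/2)$, which lies in the right half-plane $\{\operatorname{Re} z > 0\}$---and establish a bound $\sup_{D(t_0, t_0/2)} |g| \le C' e^{-c'/t_0}$ for suitable constants $C', c' > 0$. Granting this, Cauchy's estimate yields
\[
|g^{(\ell)}(t_0)| \;\le\; \frac{\ell!}{(t_0/2)^{\ell}} \sup_{D(t_0,t_0/2)} |g| \;\le\; \ell!\, 2^{\ell}\, C'\, t_0^{-\ell}\, e^{-c'/t_0},
\]
whose right-hand side tends to $0$ as $t_0 \to 0^+$ for each fixed $\ell$, proving the corollary.

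The main obstacle is establishing the holomorphic extension together with the exponential bound on the disk. Using the representation $g(t) = \sum_i p_i \int \gamma_t(x-a_i)\log(1 + R_i(x,t))\,dx$ with $R_i(x,t) = \sum_{j\ne i}\frac{p_j}{p_i}\exp\big(\frac{\|x-a_i\|^2 - \|x-a_j\|^2}{2t}\big)$, two points need care. For \emph{well-definedness of the logarithm at complex $t$}, I would split the $x$-integral into the region near $a_i$, where $R_i$ is exponentially small and $1 + R_i$ stays uniformly away from the branch cut $(-\infty,0]$, and its complement, where the Gaussian factor $\gamma_t(x-a_i)$ supplies the decay. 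For \emph{uniform exponential control}, I would use that $\operatorname{Re}(1/t) \ge \tfrac{2}{9 t_0}$ on $D(t_0,t_0/2)$, so the tail estimates driving the proof of Theorem~\ref{Thm:GenMixt} go through with $t$ replaced by a constant multiple of $t_0$. A fully real-variable alternative avoids complex analysis: since $\partial_t \rho_t = \frac12 \Delta \rho_t$, one has $\partial_t^\ell \rho_t = 2^{-\ell}\Delta^\ell \rho_t$, and differentiating $g$ under the integral via Fa\`a di Bruno on $\log(1+R_i)$ produces integrands that are products of $\Delta$-derivatives of Gaussians; each differentiation contributes only polynomial-in-$(1/t)$ prefactors, which are dominated by the $e^{-c/t}$ Gaussian factor. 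In either route the genuinely technical step is justifying the differentiation (or holomorphy) under the integral sign and tracking the uniform exponential bound on the increasingly complicated integrands.
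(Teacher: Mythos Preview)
Your approach is sound and, in fact, more careful than the paper's own argument. The paper's proof consists of two sentences: it records that Theorem~\ref{Thm:GenMixt} yields
\[
\left|\frac{I(X_0;X_t)-h(p)}{t^\ell}\right| \le 3(k-1)\|p\|_\infty\,\frac{e^{-0.085\,m^2/t}}{t^\ell}\to 0
\]
for every $\ell$, and then asserts that ``inductively, this implies all derivatives of $I(X_0;X_t)$ tend to $0$ exponentially fast.'' As you correctly observe, the passage from $g(t)=O(e^{-c/t})$ (or even $g(t)=o(t^\ell)$ for all $\ell$) to $g^{(\ell)}(t)\to 0$ is false for generic smooth $g$---take $g(t)=e^{-1/t}\sin(e^{1/t})$---so some additional structure is needed. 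A charitable reading of the paper's ``inductive'' step is the Taylor/mean-value argument: if one already knows each $g^{(\ell)}$ extends continuously to $t=0$, then $g(t)=o(t^\ell)$ forces that limit to be zero. But the paper does not supply the a~priori existence of those limits.

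Your Cauchy-estimate route provides exactly that missing input by exploiting the holomorphic dependence of the Gaussian-mixture density on complex time; this is the standard rigorous way to convert an exponential bound on $g$ into exponential bounds on all derivatives. The real-variable alternative you sketch---using $\partial_t\rho_t=\tfrac12\Delta\rho_t$ to write $\partial_t^\ell$ as spatial derivatives and then differentiating the explicit integral representation of $g$---is equally viable and arguably closer to what the paper had in mind. Either way, your proposal is a genuine completion of the paper's sketch rather than a different proof of an equally detailed argument.
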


In particular, the first derivative of mutual information, which is negative mutual Fisher information, starts at $0$.
Since the initial distribution is bounded, mutual information is eventually convex by Theorem~\ref{Thm:EventConv}, which means mutual Fisher information is eventually decreasing. 
Since mutual Fisher information is always nonnegative, this means it must initially increase, during which mutual information is concave; this is similar to the behavior observed in~$\S\ref{Sec:MixtPoint}$.

Moreover,
by the continuity of the second-order Fisher information, this suggests that when the initial distribution is a mixture of Gaussians, mutual information may be also be concave at some small time.

\section{Discussion and future work}

In this paper we have studied the convexity of mutual information along the heat flow.
We have shown that under a wide variety of conditions mutual information is eventually convex, and we have shown examples where mutual information may be concave at some small time.
Many questions remain.

One question is how much we can extend the results to general stochastic processes.
We can show most of our results still hold for the Ornstein-Uhlenbeck process~\cite{WJOU18}.
For general Fokker-Planck processes the situation is more complicated, but at least there are explicit formulae for the second derivatives~\cite{Vil08}. 

Another question is 
whether there are other conditions that imply eventual log-concavity under the heat flow.
Currently we only know it for when the initial distribution is a convolution of a bounded and a log-concave distribution.
It is interesting to study what happens for a larger class of initial distributions, for example sub-Gaussian.

Alternatively, 
for each point in space we can define the notion of a ``time to log-concavity,'' after which the final distribution is log-concave at that point.
In general, this time is finite for each fixed point, and eventual log-concavity occurs if 
the supremum of this time over space is finite.
There is a generic bound for this time to log-concavity in terms of the variance, and we can prove a slightly better bound under sub-Gaussian assumption, but not much is known.

We are seeking a proof of the nonconvexity of mutual information for the examples presented in~$\S$\ref{Sec:NonConv}.
The nonconvexity is clear from Figure~\ref{Fig:Mixt}, and we have explicit formulae for the second derivatives, but it is desirable to have a formal proof.

It is also interesting to study the effect of dimension in this problem, whether it makes convexity of mutual information easier or more difficult to occur.
From Theorem~\ref{Thm:EventConvFI}, and 
taking into account the growth of Fisher information and fourth moment with dimension, we see that the effect of dimension seems to be to delay the eventual convexity.

Finally, for mixtures of point masses, we have shown that the definition of self-information under the heat flow remembers the discrete initial data.
We can show this also holds for the Ornstein-Uhlenbeck process~\cite{WJOU18}.
It is interesting to study whether the self-information limit is the same under more general flows such as the Fokker-Planck process.

\bibliographystyle{IEEEtran}
\bibliography{mi_arxiv_v2.bbl}

\newpage
\appendix

\section{Proofs}

\subsection{Proof of Lemma~\ref{Lem:DerEnt}}

These identities follow by direct calculation and integration by parts (and Bochner's formula for the second identity).
The first derivative of entropy along the heat flow is De Bruijn's identity~\cite{Sta59}.
The second derivative of entropy along the heat flow is by McKean~\cite{McKean66} in one dimension, and by Toscani~\cite{Tos99} in multi dimension; see also Villani~\cite{Vil00} for a clean proof.

\subsection{Proof of Lemma~\ref{Lem:KJMut}}
\label{App:ProofKJMut}

We first introduce some definitions.
We view the joint distribution $\rho_{XY}(x,y) = \rho_Y(y) \rho_{X|Y}(x\,|\,y)$ as a family of probability distributions $\rho_{X|Y}(\cdot\,|\,y)$ parameterized by $y \in \R^n$, which has distribution $\rho_Y$.
We also assume the density $\rho_{X|Y}(\cdot\,|\,y)$ is smooth with respect to $y$.

The {\em pointwise backward Fisher information matrix} of $X$ given $Y=y$ is
\begin{multline*}
\widetilde \Phi(X\,|\,Y=y) = \\ \int_{\R^n} \rho_{X|Y}(x\,|\,y) (\nabla_y \log \rho_{X|Y}(x\,|\,y))(\nabla_y \log \rho_{X|Y}(x\,|\,y))^\top dx.
\end{multline*}
By integration by parts (assuming boundary terms vanish), we can also write
$$\widetilde \Phi(X\,|\,Y\!=\!y) = -\int_{\R^n} \rho_{X|Y}(x\,|\,y) \nabla^2_y \log \rho_{X|Y}(x\,|\,y) dx.$$
The {\em pointwise backward Fisher information} of $X$ given $Y=y$ is
\begin{align*}
\Phi(X\,|\,Y\!=\!y) &= \Tr(\widetilde \Phi(X\,|\,Y\!=\!y)) \\ 
&= \int_{\R^n} \rho_{X|Y}(x\,|\,y) \|\nabla_y \log \rho_{X|Y}(x\,|\,y)\|^2 dx.
\end{align*}
The {\em backward Fisher information matrix} of $X$ given $Y$ is
$$\widetilde \Phi(X\,|\,Y) = \int_{\R^n} \rho_Y(y) \, \widetilde \Phi(X\,|\,Y\!=\!y) \, dy.$$
The {\em backward Fisher information} of $X$ given $Y$ is
$$\Phi(X\,|\,Y) = \Tr(\widetilde \Phi(X\,|\,Y)).$$
Note $\widetilde \Phi(X\,|\,Y\!=\!y) \succeq 0$ and $\Phi(X\,|\,Y\!=\!y) \ge 0$ for all $y \in \R^n$, so $\widetilde \Phi(X\,|\,Y) \succeq 0$ and $\Phi(X\,|\,Y) \ge 0$.

Similarly, 
the 
{\em pointwise backward second-order Fisher information} of $X$ given $Y=y$ is
$$
\Psi(X\,|\,Y\!=\!y) = 
\int_{\R^n} \rho_{X|Y}(x\,|\,y) \|\nabla^2_y \log \rho_{X|Y}(x\,|\,y)\|^2_{\HS} \, dx.
$$
The {\em backward second-order Fisher information} of $X$ given $Y$ is
$$\Psi(X\,|\,Y) = \int_{\R^n} \rho_Y(y) \, \Psi(X\,|\,Y\!=\!y) \, dy.$$
Note that $\Psi(X\,|\,Y=y) \ge 0$ for all $y \in \R^n$, so $\Psi(X\,|\,Y) \ge 0$.

Finally, the {\em Fisher information matrix} of $Y$ is
$$\widetilde J(Y) = \int_{\R^n} \rho_Y(y) (\nabla_y \log \rho_Y(y))(\nabla_y \log \rho_Y(y))^\top dy.$$
By integration by parts (assuming boundary terms vanish), we can also write
$$\widetilde J(Y) = -\int_{\R^n} \rho_Y(y) \nabla^2_y \log \rho_Y(y) dy.$$
Note that $\widetilde J(Y) \succeq 0$ and Fisher information is its trace: $J(Y) = \Tr(\widetilde J(Y))$.

As stated in~$\S\ref{Sec:Mut}$, mutual Fisher information is in fact equal to the backward Fisher information.

\begin{lemma}\label{Lem:JPhi}
For any joint random variable $(X,Y)$,
$$J(X;Y) = \Phi(X\,|\,Y).$$
\end{lemma}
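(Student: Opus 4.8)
The plan is to exploit the two factorizations $\rho_{XY}(x,y) = \rho_X(x)\rho_{Y|X}(y\,|\,x) = \rho_Y(y)\rho_{X|Y}(x\,|\,y)$ by differentiating both in $y$. Taking $\nabla_y\log$ of each factorization and using that $\rho_X(x)$ does not depend on $y$, I would obtain the pointwise score identity
\begin{align*}
\nabla_y\log\rho_{X|Y}(x\,|\,y) = \nabla_y\log\rho_{Y|X}(y\,|\,x) - \nabla_y\log\rho_Y(y).
\end{align*}
This is the engine of the whole argument: it rewrites the backward score appearing in $\widetilde\Phi$ in terms of the forward conditional score and the marginal score.

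Next I would start from $\Phi(X\,|\,Y) = \int\!\int \rho_{XY}(x,y)\,\|\nabla_y\log\rho_{X|Y}(x\,|\,y)\|^2\,dx\,dy$ (using $\widetilde\Phi(X\,|\,Y\!=\!y) = \int \rho_{X|Y}(x\,|\,y)(\nabla_y\log\rho_{X|Y})(\nabla_y\log\rho_{X|Y})^\top dx$ and taking the trace) and substitute the score identity, expanding the square into three pieces. The squared forward-conditional term integrates to $J(Y\,|\,X)$ by definition, and the squared marginal term integrates to $J(Y)$ since $\int \rho_{XY}(x,y)\,\|\nabla_y\log\rho_Y(y)\|^2\,dx\,dy = \int \rho_Y(y)\,\|\nabla_y\log\rho_Y(y)\|^2\,dy$. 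So everything reduces to evaluating the cross term, and the target $J(X;Y) = J(Y\,|\,X) - J(Y)$ will follow if the cross term equals $-2J(Y)$.

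The cross term is the crux. I would compute it using the cancellation $\rho_{XY}(x,y)\,\nabla_y\log\rho_{Y|X}(y\,|\,x) = \rho_X(x)\,\nabla_y\rho_{Y|X}(y\,|\,x)$, then integrate out $x$ first. Pulling $\nabla_y$ outside the $x$-integral gives $\int \rho_X(x)\,\nabla_y\rho_{Y|X}(y\,|\,x)\,dx = \nabla_y\rho_Y(y)$, so the cross term collapses to $-2\int \nabla_y\rho_Y(y)\cdot\nabla_y\log\rho_Y(y)\,dy = -2\int \rho_Y(y)\,\|\nabla_y\log\rho_Y(y)\|^2\,dy = -2J(Y)$, where I used $\nabla_y\rho_Y = \rho_Y\nabla_y\log\rho_Y$. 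Combining the three pieces yields $\Phi(X\,|\,Y) = J(Y\,|\,X) - 2J(Y) + J(Y) = J(Y\,|\,X) - J(Y) = J(X;Y)$, as claimed.

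The only real obstacle is analytic rather than algebraic: justifying the interchange of $\nabla_y$ with the $x$-integral in the step $\int \rho_X(x)\,\nabla_y\rho_{Y|X}(y\,|\,x)\,dx = \nabla_y\rho_Y(y)$, which requires differentiation under the integral sign (dominated convergence) and is consistent with the standing smoothness and vanishing-boundary assumptions already in force in this appendix. I expect the expansion and the two diagonal identifications to be entirely routine, so I would keep those brief and devote the care to the cross term and this interchange.
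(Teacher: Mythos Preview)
Your argument is correct, but it proceeds differently from the paper's proof. The paper differentiates the factorization once more: it takes $-\nabla_y^2\log$ of $\rho_X\rho_{Y|X}=\rho_Y\rho_{X|Y}$ to obtain the \emph{linear} Hessian identity
\[
-\nabla_y^2\log\rho_{Y|X}(y\,|\,x) \;=\; -\nabla_y^2\log\rho_Y(y) \;-\; \nabla_y^2\log\rho_{X|Y}(x\,|\,y),
\]
and then integrates against $\rho_{XY}$. Because this identity is already additive, there is no square to expand and no cross term: using the integration-by-parts forms $\widetilde J(Y)=-\E[\nabla_y^2\log\rho_Y]$ and $\widetilde\Phi(X\,|\,Y\!=\!y)=-\E_{X|Y=y}[\nabla_y^2\log\rho_{X|Y}]$, one reads off the matrix identity $\widetilde J(Y\,|\,X)=\widetilde J(Y)+\widetilde\Phi(X\,|\,Y)$ directly, and the scalar result follows by tracing. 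Your route stays at the level of first derivatives, expands $\|\nabla_y\log\rho_{X|Y}\|^2$, and kills the cross term by integrating out $x$ to recover $\nabla_y\rho_Y$. Both arguments rest on comparable analytic assumptions (the paper invokes vanishing boundary terms for the Hessian representations; you invoke differentiation under the integral sign), but the paper's version is shorter and yields the full matrix equality $\widetilde J(X;Y)=\widetilde\Phi(X\,|\,Y)$ rather than just its trace.
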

\begin{proof}
From the factorization
$$\rho_X(x) \rho_{Y|X}(y\,|\,x) = \rho_Y(y) \rho_{X|Y}(x\,|\,y)$$
we have
$$-\nabla^2_y \log \rho_{Y|X}(y\,|\,x) = -\nabla^2_y \log \rho_Y(y) - \nabla^2_y \log \rho_{X|Y}(x\,|\,y).$$
We integrate both sides with respect to $\rho_{XY}(x,y)$.
The left-hand side gives the expected Fisher information matrix $\widetilde J(Y\,|\,X)$.
The first term on the right-hand side gives $\widetilde J(Y)$, while the second term gives the $\widetilde \Phi(X\,|\,Y)$.
That is,
$\widetilde J(Y\,|\,X) = \widetilde J(Y) + \widetilde \Phi(X\,|\,Y)$,
or equivalently,
$$\widetilde J(X;Y) = \widetilde J(Y\,|\,X) - \widetilde J(Y) = \widetilde \Phi(X\,|\,Y).$$
Taking trace gives
$$J(X;Y) = \Tr(\widetilde J(X;Y)) = \Tr(\widetilde \Phi(X\,|\,Y)) = \Phi(X;Y)$$
as desired.
\end{proof}

Similarly, mutual second-order Fisher information can be represented in terms of the backward second-order Fisher information, albeit in a more complicated way.

\begin{lemma}\label{Lem:KPsi}
For any joint random variable $(X,Y)$,
\begin{multline*}
K(X;Y) = \Psi(X\,|\,Y) \, + \\ 2 \int_{\R^n} \rho_Y(y) \langle -\nabla^2 \log \rho_Y(y), \, \widetilde \Phi(X\,|\,Y\!=\!y)\rangle_{\HS} \, dy.
\end{multline*}
\end{lemma}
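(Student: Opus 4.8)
The plan is to mirror the proof of Lemma~\ref{Lem:JPhi}, except that we now track the full Hessian matrices rather than only their traces, and we must account for the fact that the Hilbert--Schmidt norm is quadratic rather than linear. First I would start from the factorization $\rho_X(x)\rho_{Y|X}(y\,|\,x) = \rho_Y(y)\rho_{X|Y}(x\,|\,y)$, take logarithms, and apply the Hessian $\nabla^2_y$. Since $\log\rho_X(x)$ does not depend on $y$, this yields the pointwise identity
$$\nabla^2_y \log \rho_{Y|X}(y\,|\,x) = \nabla^2_y \log \rho_Y(y) + \nabla^2_y \log \rho_{X|Y}(x\,|\,y).$$
I would abbreviate the three symmetric matrices as $A = \nabla^2_y \log \rho_{Y|X}$, $B = \nabla^2_y \log \rho_Y$, and $C = \nabla^2_y \log \rho_{X|Y}$, so that $A = B + C$.

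Next I would rewrite both $K(Y\,|\,X)$ and $K(Y)$ as integrals against the common joint measure $\rho_{XY}(x,y)\,dx\,dy$. Using $\rho_{XY} = \rho_X\,\rho_{Y|X}$ gives $K(Y\,|\,X) = \int\!\int \rho_{XY}\,\|A\|_{\HS}^2$, and using $\rho_Y(y) = \int \rho_{XY}(x,y)\,dx$ together with the fact that $B$ depends only on $y$ gives $K(Y) = \int\!\int \rho_{XY}\,\|B\|_{\HS}^2$. Subtracting and expanding the quadratic $\|A\|_{\HS}^2 - \|B\|_{\HS}^2 = \|C\|_{\HS}^2 + 2\langle B, C\rangle_{\HS}$ then yields
$$K(X;Y) = \int_{\R^n}\!\!\int_{\R^n} \rho_{XY}(x,y)\,\|C\|_{\HS}^2\,dx\,dy \;+\; 2\int_{\R^n}\!\!\int_{\R^n} \rho_{XY}(x,y)\,\langle B, C\rangle_{\HS}\,dx\,dy.$$
The first integral is immediately $\Psi(X\,|\,Y)$ by its definition.

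For the cross term, the key observation---which I expect to be the only delicate point---is that $B = \nabla^2_y\log\rho_Y(y)$ is independent of $x$, so I can factor it out and integrate over $x$ first. The inner integral $\int_{\R^n} \rho_{X|Y}(x\,|\,y)\,\nabla^2_y\log\rho_{X|Y}(x\,|\,y)\,dx$ equals $-\widetilde\Phi(X\,|\,Y\!=\!y)$ by the integration-by-parts form of the backward Fisher information matrix established in this appendix, assuming the boundary terms vanish. Substituting this in converts the cross term into $2\int_{\R^n} \rho_Y(y)\langle -\nabla^2\log\rho_Y(y),\,\widetilde\Phi(X\,|\,Y\!=\!y)\rangle_{\HS}\,dy$, which completes the identity. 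The argument is elementary once the additive Hessian decomposition is in place; the only real subtleties are keeping careful track of which measure each term is integrated against (so that the two quantities $K(Y\,|\,X)$ and $K(Y)$ can be combined under a single integral) and invoking the integration-by-parts identity to recognize the matrix $\widetilde\Phi(X\,|\,Y\!=\!y)$ inside the cross term.
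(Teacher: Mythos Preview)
Your proposal is correct and follows essentially the same approach as the paper: start from the Hessian decomposition $\nabla^2_y\log\rho_{Y|X} = \nabla^2_y\log\rho_Y + \nabla^2_y\log\rho_{X|Y}$, take the squared Hilbert--Schmidt norm, integrate against $\rho_{XY}$, and in the cross term integrate over $x$ first to produce $\widetilde\Phi(X\,|\,Y\!=\!y)$. The only cosmetic difference is that the paper expands $\|A\|_{\HS}^2$ directly and then identifies the terms, whereas you first form the difference $\|A\|_{\HS}^2-\|B\|_{\HS}^2$; the underlying computation is identical.
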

\begin{proof}
As before we have the decomposition
$$-\nabla^2_y \log \rho_{Y|X}(y\,|\,x) = -\nabla^2_y \log \rho_Y(y) - \nabla^2_y \log \rho_{X|Y}(x\,|\,y).$$
Taking the squared norm on both sides and expanding, we get
\begin{align*}
&\|\nabla^2_y \log \rho_{Y|X}(y\,|\,x)\|^2_{\HS} \\
&~~~~= \|\nabla^2_y \log \rho_Y(y)\|^2_{\HS} + \|\nabla^2_y \log \rho_{X|Y}(x\,|\,y)\|^2_{\HS} \\ 
&~~~~~~~~ + 2 \langle \nabla^2_y \log \rho_Y(y), \nabla^2_y \log \rho_{X|Y}(x\,|\,y) \rangle_{\HS}.
\end{align*}
We integrate both sides with respect to $\rho_{XY}(x,y)$.
On the left-hand side we get $K(Y\,|\,X)$.
The first term on the right-hand side  
gives $K(Y)$; the second term gives $\Psi(X\,|\,Y)$;
for the third term, 
by first integrating over $\rho_{X|Y}(x\,|\,y)$ we obtain an inner product with $\widetilde \Phi(X\,|\,Y\!=\!y)$.
That is,
\begin{multline*}
K(Y\,|\,X) = K(Y) + \Psi(X\,|\,Y) \\
+ 2 \int_{\R^n} \rho_Y(y) \langle -\nabla^2_y \log \rho_Y(y), \widetilde \Phi(X\,|\,Y\!=\!y)\rangle_{\HS} \, dy.
\end{multline*}
This implies the desired expression for $K(X;Y) = K(Y\,|\,X)-K(Y)$.
\end{proof}

We can prove a lower bound for $K(X;Y)$ under log-semiconcavity assumption on $Y$.

\begin{lemma}\label{Lem:KLC}
If $Y \sim \rho_Y$ is $\alpha$-log-semiconcave for some $\alpha \in \R$, then
$$K(X;Y) \ge \Psi(X\,|\,Y) + 2\alpha \Phi(X\,|\,Y).$$
\end{lemma}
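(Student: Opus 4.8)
The plan is to start from the exact identity for $K(X;Y)$ furnished by Lemma~\ref{Lem:KPsi}, namely that $K(X;Y)$ equals $\Psi(X\,|\,Y)$ plus twice the integral of the Hilbert--Schmidt inner product $\langle -\nabla^2 \log \rho_Y(y), \widetilde \Phi(X\,|\,Y\!=\!y)\rangle_{\HS}$ against $\rho_Y$. Since the desired bound differs from this identity only in the cross term, where $\alpha \Phi(X\,|\,Y)$ should replace the integral, the entire task reduces to proving the pointwise lower bound
$$\langle -\nabla^2 \log \rho_Y(y), \widetilde \Phi(X\,|\,Y\!=\!y)\rangle_{\HS} \ge \alpha \, \Phi(X\,|\,Y\!=\!y)$$
for every $y \in \R^n$, and then integrating against $\rho_Y$.

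The key step is a linear-algebra fact about the Hilbert--Schmidt pairing of symmetric matrices. I would first recall that for two positive semidefinite symmetric matrices $P, Q \succeq 0$ one has $\langle P, Q\rangle_{\HS} = \Tr(PQ) = \Tr(P^{1/2} Q P^{1/2}) \ge 0$, since $P^{1/2} Q P^{1/2}$ is itself positive semidefinite. Applying this with $P = -\nabla^2 \log \rho_Y(y) - \alpha I$, which is positive semidefinite exactly by the $\alpha$-log-semiconcavity hypothesis, and $Q = \widetilde \Phi(X\,|\,Y\!=\!y)$, which is positive semidefinite by construction, gives $\langle -\nabla^2 \log \rho_Y(y) - \alpha I, \widetilde \Phi(X\,|\,Y\!=\!y)\rangle_{\HS} \ge 0$. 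Rearranging by linearity and using $\langle I, \widetilde \Phi(X\,|\,Y\!=\!y)\rangle_{\HS} = \Tr(\widetilde \Phi(X\,|\,Y\!=\!y)) = \Phi(X\,|\,Y\!=\!y)$ then yields the pointwise bound above.

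Finally, I would integrate the pointwise inequality against $\rho_Y(y)\,dy$. The cross term is bounded below by $\alpha \int_{\R^n} \rho_Y(y) \Phi(X\,|\,Y\!=\!y)\,dy = \alpha \Phi(X\,|\,Y)$, and substituting this into the identity from Lemma~\ref{Lem:KPsi} gives $K(X;Y) \ge \Psi(X\,|\,Y) + 2\alpha \Phi(X\,|\,Y)$, as claimed. I do not anticipate a genuine obstacle: the only points requiring care are checking that the sign convention makes $P = -\nabla^2 \log \rho_Y - \alpha I \succeq 0$ follow from $-\nabla^2 \log \rho_Y \succeq \alpha I$, and invoking the nonnegativity of the Hilbert--Schmidt inner product of positive semidefinite matrices, which is the crux of the argument.
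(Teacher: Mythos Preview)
Your proposal is correct and essentially identical to the paper's own proof: both start from the identity in Lemma~\ref{Lem:KPsi}, establish the pointwise inequality $\langle -\nabla^2 \log \rho_Y(y), \widetilde \Phi(X\,|\,Y\!=\!y)\rangle_{\HS} \ge \alpha\, \Phi(X\,|\,Y\!=\!y)$ using $-\nabla^2 \log \rho_Y(y) - \alpha I \succeq 0$ and $\widetilde \Phi(X\,|\,Y\!=\!y) \succeq 0$, and then integrate against $\rho_Y$. Your added justification via $\Tr(P^{1/2}QP^{1/2}) \ge 0$ makes explicit the linear-algebra step that the paper uses implicitly.
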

\begin{proof}
Since $-\nabla^2 \log \rho_Y(y) \succeq \alpha I$ and $\widetilde \Phi(X\,|\,Y=y) \succeq 0$ for all $y \in \R^n$, we have
\begin{align*}
\langle -\nabla^2 \log \rho_Y(y), \, \widetilde \Phi(X\,|\,Y\!=\!y) \rangle_{\HS}
&\ge \, \langle \alpha I, \, \widetilde \Phi(X\,|\,Y\!=\!y) \rangle_{\HS} \\
&=\, \alpha \Tr(\widetilde \Phi(X\,|\,Y\!=\!y)) \\
&=\, \alpha \, \Phi(X\,|\,Y\!=\!y).
\end{align*}
Integrating with respect to $\rho_Y(y)$ gives
\begin{align*}
&\int_{\R^n} \rho_Y(y) \langle -\nabla^2 \log \rho_Y(y), \, \widetilde \Phi(X\,|\,Y\!=\!y) \rangle_{\HS} \, dy \\
&~~~~~~~ \ge\, \alpha \int_{\R^n} \rho(y) \Phi(X\,|\,Y\!=\!y) \, dy
 \,=\, \alpha \, \Phi(X\,|\,Y).
\end{align*}
Adding $\Psi(X\,|\,Y)$ and using Lemma~\ref{Lem:KPsi} gives the result.
\end{proof}

Furthermore, we have the following result which is reminiscent of the inequality~\eqref{Eq:KJ} between first and second-order Fisher information.

\begin{lemma}\label{Lem:PsiPhi}
For any joint random variable $(X,Y)$ in $\R^n \times \R^n$,
$$\Psi(X\,|\,Y) \ge \frac{\Phi(X\,|\,Y)^2}{n}.$$
\end{lemma}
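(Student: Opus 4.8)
The plan is to reduce the claim to a single expectation inequality for a random symmetric matrix, and then dispatch it with two elementary convexity/Cauchy--Schwarz steps. First I would introduce the random symmetric matrix $A = -\nabla^2_y \log \rho_{X|Y}(X\,|\,Y)$, regarded as a function of the joint pair $(X,Y) \sim \rho_{XY}$. Using the integration-by-parts (negative-Hessian) representation of $\widetilde\Phi$, the three quantities in the statement unfold into clean expectations over the joint law: $\Psi(X\,|\,Y) = \E[\|A\|^2_{\HS}]$, $\widetilde\Phi(X\,|\,Y) = \E[A]$, and hence $\Phi(X\,|\,Y) = \Tr \E[A] = \E[\Tr A]$ by linearity of trace and expectation. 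Thus the desired inequality becomes $\E[\|A\|^2_{\HS}] \ge \frac{1}{n}(\E[\Tr A])^2$ for an arbitrary random symmetric matrix $A \in \R^{n \times n}$.

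Next I would prove this matrix inequality in two moves. The first is Jensen's inequality applied to the convex map $B \mapsto \|B\|^2_{\HS}$, which gives $\E[\|A\|^2_{\HS}] \ge \|\E[A]\|^2_{\HS}$. The second is an elementary trace--norm bound: writing $\bar A = \E[A]$, Cauchy--Schwarz applied to its diagonal entries (equivalently, to its eigenvalues, which are real since $\bar A$ is symmetric) yields $\|\bar A\|^2_{\HS} = \sum_{i,j} \bar A_{ij}^2 \ge \sum_i \bar A_{ii}^2 \ge \frac{1}{n}\big(\sum_i \bar A_{ii}\big)^2 = \frac{1}{n}(\Tr \bar A)^2$. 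Chaining the two bounds and using $\Tr \E[A] = \E[\Tr A]$ completes the argument.

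I do not expect a serious obstacle here, since both steps are standard. The one point requiring care is the very first reduction: one must use the negative-Hessian form of $\widetilde\Phi(X\,|\,Y\!=\!y)$ rather than the outer-product form, because it is precisely the Hessian $\nabla^2_y \log \rho_{X|Y}$ whose squared Hilbert--Schmidt norm defines $\Psi$. Matching these representations so that the same matrix $A$ appears inside both $\Psi$ and $\widetilde\Phi$ is exactly what makes the single Jensen step across the joint law applicable; this mirrors the structure of the basic inequality~\eqref{Eq:KJ}, where $K$ bounds $J^2/n$ in the non-mutual setting.
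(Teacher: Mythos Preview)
Your proposal is correct and essentially mirrors the paper's proof: both introduce the random matrix $A = -\nabla^2_y \log \rho_{X|Y}(X\,|\,Y)$, identify $\Psi(X\,|\,Y) = \E[\|A\|^2_{\HS}]$ and $\Phi(X\,|\,Y) = \E[\Tr A]$, and finish with two Cauchy--Schwarz/Jensen steps. The only cosmetic difference is the order: the paper applies the pointwise bound $\|A\|^2_{\HS} \ge (\Tr A)^2/n$ first and then Jensen on the scalar $(\Tr A)^2$, whereas you apply Jensen on the matrix $\|A\|^2_{\HS}$ first and then the trace bound on $\E[A]$.
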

\begin{proof}
Let $A_{x,y} = -\nabla^2_y \log \rho_{X|Y}(x\,|\,y)$.
By Cauchy-Schwarz inequality,
$$\|A_{x,y}\|^2_{\HS} = \Tr(A_{x,y}^2) \ge \frac{(\Tr(A_{x,y}))^2}{n}.$$
Taking expectation over $(X,Y) \sim \rho_{XY}$ and applying Cauchy-Schwarz again,
we get the desired result
\begin{align*}
\Psi(X\,|\,Y) &= \E[\|A_{X,Y}\|^2_{\HS}] \\
&\ge \frac{\E[(\Tr(A_{X,Y}))^2]}{n} \\
&\ge \frac{(\E[\Tr(A_{X,Y})])^2}{n} \\
&= \frac{\Phi(X\,|\,Y)^2}{n}.
\end{align*}
\end{proof}

Finally, we are ready to prove Lemma~\ref{Lem:KJMut}.

\begin{proof}[Proof of Lemma~\ref{Lem:KJMut}]
By Lemma~\ref{Lem:KLC} and~\ref{Lem:PsiPhi},
$$K(X;Y) \ge \frac{\Phi(X\,|\,Y)^2}{n} + 2\alpha \Phi(X\,|\,Y).$$
Since $J(X;Y) = \Phi(X\,|\,Y)$ by Lemma~\ref{Lem:JPhi}, the result follows.
\end{proof}

\subsection{Proof of Lemma~\ref{Lem:DerMut}}

These identities follow from Lemma~\ref{Lem:DerEnt} and the linearity of the heat flow channel.

Concretely, recall by Lemma~\ref{Lem:DerEnt} that $\frac{d}{dt} H(X_t) = \frac{1}{2} J(X_t)$.
We apply this result to the conditional density $\rho_{X_t|X_0}(\cdot\,|\,x_0)$ to get $\frac{d}{dt} H(X_t\,|\,X_0=x_0) = \frac{1}{2} J(X_t\,|\,X_0=x_0)$ for each $x_0 \in \R^n$.
Taking expectation over $X_0 \sim \rho_0$ and interchanging the order of expectation and time differentiation yields $\frac{d}{dt} H(X_t\,|\,X_0) = \frac{1}{2} J(X_t\,|\,X_0)$.
Combining this with the earlier result above yields
$\frac{d}{dt} I(X_0;X_t) = \frac{1}{2} J(X_0;X_t)$, as desired.
The proof for $\frac{d^2}{dt^2} I(X_0;X_t) = -\frac{1}{2} K(X_0;X_t)$ proceeds identically using the second identity in Lemma~\ref{Lem:DerEnt}.

\subsection{Detail for~$\S\ref{Sec:MutHeat}$}
\label{App:DetMutHeat}

We compute $J(X_t;X_0)$ and $K(X_t;X_0)$ along the heat flow $X_t = X_0 + \sqrt{t} Z$.
Let $X_0 \sim \rho_0$, $X_t \sim \rho_t$, $(X_0,X_t) \sim \rho_{0t}$, and we write the conditionals as
$$\rho_0(x) \rho_{t|0}(y\,|\,x) = \rho_{0t}(x,y) = \rho_t(y) \rho_{0|t}(x\,|\,y).$$
Then
$$-\nabla_x \log \rho_{0|t}(x\,|\,y) = -\nabla_x \log \rho_0(x) - \nabla_x \log \rho_{t|0}(y\,|\,x).$$
Along the heat flow $X_t\,|\,X_0$ is Gaussian with covariance $tI$, so we have explicitly $-\nabla_x \log \rho_{t|0}(y\,|\,x) = \frac{x-y}{t}$.
Therefore,
\begin{align}\label{Eq:DetMutHeatCalc}
-\nabla_x \log \rho_{0|t}(x\,|\,y) = -\nabla_x \log \rho_0(x) +\frac{x-y}{t}.
\end{align}
Take the squared norm on both sides and expand:
\begin{align*}
\|\nabla_x \log \rho_{0|t}(x\,|\,y)\|^2 &= \|\nabla_x \log \rho_0(x)\|^2 + \frac{\|x-y\|^2}{t^2} \\
&~~~~ + \frac{2}{t} \langle -\nabla_x \log \rho_0(x), x-y \rangle.
\end{align*}
Now we take expectation of both sides over $(X_0,X_t)$.
The left-hand side gives $J(X_0\,|\,X_t)$.
The first term on the right-hand side gives $J(X_0)$;
the second term gives $\frac{1}{t^2}\E[\|X_0-X_t\|^2] = \frac{1}{t^2} \E[\|\sqrt{t}Z\|^2] = \frac{n}{t}$ where $Z \sim \N(0,I)$;
while the third term gives $0$ by integrating over $y$ first for each fixed $x$.
That is,
\begin{align}\label{Eq:JXYX}
J(X_0\,|\,X_t) = J(X_0) + \frac{n}{t}.
\end{align}
Therefore,
$$J(X_t;X_0) = J(X_0\,|\,X_t) - J(X_0) = \frac{n}{t}.$$

Next, we differentiate~\eqref{Eq:DetMutHeatCalc} again with respect to $x$ to get
$$-\nabla^2_x \log \rho_{0|t}(x\,|\,y) = -\nabla^2_x \log \rho_0(x) +\frac{I}{t}.$$
Take the squared norm on both sides and expand:
\begin{align*}
&\|\nabla^2_x \log \rho_{0|t}(x\,|\,y)\|_{\HS}^2 \\
&= \|\nabla^2_x \log \rho_0(x)\|_{\HS}^2 + \frac{\|I\|^2_{\HS}}{t^2} 
+ \frac{2}{t} \langle -\nabla^2_x \log \rho_0(x), I \rangle_{\HS} \\
&= \|\nabla^2_x \log \rho_0(x)\|_{\HS}^2 + \frac{n}{t^2} 
- \frac{2}{t} \Delta_x \log \rho_0(x).
\end{align*}
Now we take expectation of both sides over $(X_0,X_t)$.
The left-hand side gives $K(X_0\,|\,X_t)$.
The first term on the right-hand side gives $K(X_0)$;
the second term is a constant;
while the third term gives $\frac{2}{t}J(X_0)$.
That is,
$$K(X_0\,|\,X_t) = K(X_0) + \frac{n}{t^2} + \frac{2}{t} J(X_0).$$
Therefore,
$$K(X_t;X_0) = K(X_0\,|\,X_t) - K(X_0) = \frac{n}{t^2} + \frac{2}{t} J(X_0).$$

\subsection{Proof of Theorem~\ref{Thm:PerpConv}}

Recall that the heat flow preserves log-concavity.
This is because the Gaussian density (the heat kernel) is log-concave, and convolution with a log-concave distribution preserves log-concavity by the Pr\'ekopa-Leindler inequality.

By assumption $X_0 \sim \rho_0$ is log-concave, so $X_t \sim \rho_t$ is also log-concave for all $t \ge 0$.
By Lemma~\ref{Lem:KJMut} and~\ref{Lem:DerMut}, this implies
$\frac{d^2}{dt^2} I(X_0;X_t) = K(X_0;X_t) \ge 0$, 
which means mutual information is always convex.

\subsection{Proof of Theorem~\ref{Thm:EventConv}}
\label{App:ProofEventConv}

Throughout, let $X_t = X_0 + \sqrt{t} Z$ denote the heat flow.
Let $X_0 \sim \rho_0$, $X_t \sim \rho_t$, $(X_0,X_t) \sim \rho_{0t}$, and we write the conditionals as
$$\rho_0(x) \rho_{t|0}(y\,|\,x) = \rho_{0t}(x,y) = \rho_t(y) \rho_{0|t}(x\,|\,y).$$

We first establish the following result to help us determine when we have eventual log-concavity under the heat flow; see Appendix~\ref{App:ProofHesHeat} for the proof.

\begin{lemma}\label{Lem:HesHeat}
Along the heat flow, for all $y \in \R^n$,
\begin{align}\label{Eq:HesHeat}
-\nabla^2_y \log \rho_{t}(y) = \frac{1}{t} \left(I - \frac{1}{t} \Cov(\rho_{0|t}(\cdot\,|\,y))\right).
\end{align}
\end{lemma}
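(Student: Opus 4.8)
The plan is to compute the Hessian of $\log\rho_t$ directly from the convolution representation $\rho_t = \rho_0 \ast \gamma_t$, where $\gamma_t(z) = (2\pi t)^{-n/2}e^{-\|z\|^2/(2t)}$ is the heat kernel. Writing $\rho_t(y) = \int_{\R^n}\rho_0(x)\,\gamma_t(y-x)\,dx$ and using $\nabla_y \gamma_t(y-x) = -\tfrac{1}{t}(y-x)\gamma_t(y-x)$ together with $\nabla^2_y\gamma_t(y-x) = \bigl(-\tfrac{1}{t}I + \tfrac{1}{t^2}(y-x)(y-x)^\top\bigr)\gamma_t(y-x)$, the first and second derivatives of $\rho_t$ become integrals of these factors against $\rho_0(x)\gamma_t(y-x)$. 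Since the posterior density is $\rho_{0|t}(x\,|\,y) = \rho_0(x)\gamma_t(y-x)/\rho_t(y)$, dividing by $\rho_t(y)$ turns these integrals into posterior expectations of the corresponding polynomials in $y-x$, i.e.\ into moments of $X_0$ given $X_t=y$.

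Concretely, I would first record the score $\nabla_y\log\rho_t(y) = \tfrac{\nabla_y\rho_t(y)}{\rho_t(y)} = -\tfrac{1}{t}\bigl(y - \E[X_0\,|\,X_t=y]\bigr)$, which identifies the gradient with the scaled posterior-mean residual. Then, using the identity $\nabla^2_y\log\rho_t = \tfrac{\nabla^2_y\rho_t}{\rho_t} - (\nabla_y\log\rho_t)(\nabla_y\log\rho_t)^\top$, the first term evaluates to $-\tfrac{1}{t}I + \tfrac{1}{t^2}\E[(y-X_0)(y-X_0)^\top\,|\,X_t=y]$ and the second subtracts $\tfrac{1}{t^2}(y-m)(y-m)^\top$, where $m = \E[X_0\,|\,X_t=y]$. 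The difference of these two second-moment terms is exactly the posterior covariance $\Cov(\rho_{0|t}(\cdot\,|\,y))$, since subtracting the deterministic shift $y$ leaves the covariance unchanged. Negating and factoring out $\tfrac{1}{t}$ yields $-\nabla^2_y\log\rho_t(y) = \tfrac{1}{t}\bigl(I - \tfrac{1}{t}\Cov(\rho_{0|t}(\cdot\,|\,y))\bigr)$, which is~\eqref{Eq:HesHeat}.

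An equivalent route is to differentiate the score once more: from $\nabla_y\log\rho_t(y) = -\tfrac{1}{t}(y - m(y))$ one gets $-\nabla^2_y\log\rho_t = \tfrac{1}{t}(I - \nabla_y m(y))$, and a short computation using $\partial_{y_j}\log\rho_{0|t}(x\,|\,y) = \tfrac{1}{t}(x_j - m_j(y))$ shows that the Jacobian of the posterior mean is $\nabla_y m(y) = \tfrac{1}{t}\Cov(\rho_{0|t}(\cdot\,|\,y))$, a Tweedie / Bayesian Cram\'er--Rao type identity. Either way, the only point requiring care is the interchange of $\nabla_y$ with $\int dx$, and this is where I would pause; but it is not a genuine obstacle, since for every $t>0$ the Gaussian factor makes $\rho_t$ smooth with rapidly decaying derivatives that are dominated uniformly for $y$ in compact sets, so differentiation under the integral sign is legitimate. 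The algebra itself is routine; the one thing to keep straight is that all derivatives are taken in the $y$ variable, so that the posterior moments that appear are those of $X_0$ given $X_t=y$, in contrast to the $x$-derivatives used in Appendix~\ref{App:DetMutHeat}.
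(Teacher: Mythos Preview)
Your proposal is correct. Both routes you sketch (direct differentiation of the convolution and the Tweedie-type identity $\nabla_y m(y)=\tfrac{1}{t}\Cov$) lead cleanly to~\eqref{Eq:HesHeat}, and the justification for differentiating under the integral is exactly as you say.

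The paper takes a somewhat different path. Rather than working from $\rho_t=\rho_0\ast\gamma_t$, it first proves a separate lemma that $-\nabla_y^2\log\rho_{0|t}(x\,|\,y)=\tfrac{1}{t^2}\Cov(\rho_{0|t}(\cdot\,|\,y))$ by observing that the posterior $\rho_{0|t}(\cdot\,|\,y)$ is an exponential family in $x$ with natural parameter $\eta=y/t$, so that $-\nabla_y^2\log\rho_{0|t}=\tfrac{1}{t^2}\nabla_\eta^2 A(\eta)$, and the Hessian of the log-partition function is the covariance. It then takes $-\nabla_y^2\log$ of the Bayes factorization $\rho_t(y)\rho_{0|t}(x\,|\,y)=\rho_0(x)\rho_{t|0}(y\,|\,x)$; since $-\nabla_y^2\log\rho_{t|0}(y\,|\,x)=\tfrac{1}{t}I$, the result drops out immediately. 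Your approach is more self-contained and computational, producing the score identity $\nabla_y\log\rho_t=-\tfrac{1}{t}(y-m(y))$ as a by-product; the paper's approach is shorter once one is willing to invoke the exponential-family fact, and isolates the posterior-Hessian identity as a lemma that can be reused elsewhere. Your second route is in fact the same exponential-family computation in disguise: showing $\nabla_y m(y)=\tfrac{1}{t}\Cov$ is exactly the statement that the gradient of the mean parameter with respect to the natural parameter is the covariance.
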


In particular, for bounded initial distribution we have the following eventual log-concavity.

\begin{lemma}\label{Lem:BddLC}
If $X_0 \sim \rho_0$ is $D$-bounded, then along the heat flow, $X_t \sim \rho_t$ is log-concave for all $t \ge D^2$.
\end{lemma}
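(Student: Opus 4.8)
The plan is to use the Hessian formula in Lemma~\ref{Lem:HesHeat} to turn the log-concavity of $\rho_t$ into a covariance bound on the backward conditional, and then to bound that covariance purely from the boundedness of the support. First I would observe that, by Lemma~\ref{Lem:HesHeat}, the condition $-\nabla^2_y \log \rho_t(y) \succeq 0$ is equivalent to $\frac{1}{t}\bigl(I - \frac{1}{t}\Cov(\rho_{0|t}(\cdot\,|\,y))\bigr) \succeq 0$, i.e. to $\Cov(\rho_{0|t}(\cdot\,|\,y)) \preceq t I$. Since $t > 0$, proving the lemma therefore reduces to showing $\Cov(\rho_{0|t}(\cdot\,|\,y)) \preceq t I$ for every $y \in \R^n$ whenever $t \ge D^2$.

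Next I would pin down the support of the backward conditional. By Bayes' rule, $\rho_{0|t}(x\,|\,y) = \rho_0(x)\,\gamma_t(y-x)/\rho_t(y)$, and because the heat kernel $\gamma_t$ is strictly positive everywhere, the support of $\rho_{0|t}(\cdot\,|\,y)$ coincides with the support of $\rho_0$. By hypothesis this support has diameter at most $D$, so the conditional law of $X_0$ given $X_t = y$ is, for every $y$, supported on a set of diameter at most $D$.

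I would then invoke the elementary fact that any probability distribution supported on a set of diameter at most $D$ has covariance $\preceq D^2 I$. Concretely, fixing any point $x_0$ in the support and any unit vector $v$, the variance of $\langle W, v\rangle$ is bounded by the mean-square deviation about the constant $\langle x_0, v\rangle$, namely $\E[\langle W - x_0, v\rangle^2] \le \E[\|W - x_0\|^2] \le D^2$, using that $\|W - x_0\| \le D$ almost surely. Hence $v^\top \Cov(W)\, v \le D^2$ for all unit $v$, i.e. $\Cov(W) \preceq D^2 I$. (This can be sharpened to $\frac{D^2}{4} I$, but $D^2 I$ is all that is required.) Combining the three steps, for $t \ge D^2$ we obtain $\Cov(\rho_{0|t}(\cdot\,|\,y)) \preceq D^2 I \preceq t I$ for all $y$, so $-\nabla^2_y \log \rho_t(y) \succeq 0$ everywhere, meaning $\rho_t$ is log-concave.

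Given Lemma~\ref{Lem:HesHeat}, the argument is essentially immediate, so the only real content lies in the covariance--diameter bound and in verifying that the posterior inherits the support of the prior. I do not anticipate a genuine obstacle here; the main care is bookkeeping in the elementary covariance estimate and making sure the support identification uses the strict positivity of $\gamma_t$.
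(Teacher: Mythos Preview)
Your proposal is correct and follows essentially the same argument as the paper: use Lemma~\ref{Lem:HesHeat} to reduce log-concavity of $\rho_t$ to the covariance bound $\Cov(\rho_{0|t}(\cdot\,|\,y)) \preceq tI$, and obtain the latter from the fact that the posterior is $D$-bounded. Your write-up merely makes explicit the two points the paper leaves implicit (the support identification via Bayes' rule and positivity of $\gamma_t$, and the elementary diameter-to-covariance estimate).
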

\begin{proof}
Since $X_0 \sim \rho_0$ is $D$-bounded, the conditional distributions $\rho_{0|t}(\cdot\,|\,y)$ are also $D$-bounded for all $y \in \R^n$.
In particular, 
$\Cov(\rho_{0|t}(\cdot\,|\,y)) \preceq D^2 I.$
Therefore, by Lemma~\ref{Lem:HesHeat},
$$-\nabla^2_y \log \rho_{t}(y) \succeq \frac{1}{t} \left(1 - \frac{D^2}{t}\right) I.$$
If $t \ge D^2$, then $-\nabla^2_y \log \rho_{t}(y) \succeq 0$ for all $y \in \R^n$, which means $X_t \sim \rho_t$ is log-concave.
\end{proof}

We are now ready to prove Theorem~\ref{Thm:EventConv}.

\begin{proof}[Proof of Theorem~\ref{Thm:EventConv}]
By Lemma~\ref{Lem:BddLC}, $X_t \sim \rho_t$ is log-concave for $t \ge D^2$.
By Lemma~\ref{Lem:KJMut} and~\ref{Lem:DerMut}, this implies mutual information is convex for all $t \ge D^2$.
\end{proof}

\subsection{Proof of Corollary~\ref{Cor:EventConv}}

Analogous to Lemma~\ref{Lem:BddLC}, we have the following result.

\begin{lemma}\label{Lem:BddConvLC}
If $X_0 \sim \rho_0$ is a convolution of a $D$-bounded and a log-concave distribution, then along the heat flow, $X_t \sim \rho_t$ is log-concave for all $t \ge D^2$.
\end{lemma}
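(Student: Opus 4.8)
The plan is to absorb all of the Gaussian heat-flow noise into the bounded component and then invoke the stability of log-concavity under convolution. By hypothesis I can write $X_0 = B + L$, where $B$ has a $D$-bounded distribution, $L$ has a log-concave distribution, and $B, L$ are independent. Adding the heat-flow noise gives
$$X_t = X_0 + \sqrt{t}\,Z = B + \sqrt{t}\,Z + L,$$
with $B$, $L$, $Z$ mutually independent.

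The key step is to regroup the terms as $X_t = W_t + L$, where $W_t := B + \sqrt{t}\,Z$ is itself the heat flow started from the $D$-bounded distribution of $B$. Note that $W_t$ remains independent of $L$, since $L$ is independent of both $B$ and $Z$. Applying Lemma~\ref{Lem:BddLC} to $W_t$ shows that the distribution of $W_t$ is log-concave for all $t \ge D^2$.

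It then remains to observe that for $t \ge D^2$ the random variable $X_t = W_t + L$ is a sum of two independent log-concave random variables, so its density is the convolution of two log-concave densities. As recalled in the proof of Theorem~\ref{Thm:PerpConv}, the convolution of log-concave distributions is again log-concave by the Pr\'ekopa--Leindler inequality; hence $\rho_t$ is log-concave for all $t \ge D^2$, as claimed. I do not anticipate any genuine obstacle here: the argument is purely structural, and the only point to notice is that the heat-kernel noise may be attached entirely to the bounded factor, reducing the statement to the already-established Lemma~\ref{Lem:BddLC} together with the standard convolution-stability of log-concavity.
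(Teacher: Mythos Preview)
Your argument is correct and matches the paper's proof essentially line for line: decompose $X_0$ as a bounded part plus an independent log-concave part, attach the heat-flow noise to the bounded factor, apply Lemma~\ref{Lem:BddLC}, and finish with Pr\'ekopa--Leindler. There is no meaningful difference in approach.
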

\begin{proof}
We write $X_0 = B_0+C$ where $B_0$ is a $D$-bounded random variable and $C$ is a log-concave random variable independent of $B$.
Then $X_t = X_0 + \sqrt{t}Z = (B_0 + \sqrt{t}Z) + C = B_t+C$ where $B_t = B_0 + \sqrt{t}Z$ is the heat flow from $B_0$.
By Lemma~\ref{Lem:BddLC}, $B_t$ is log-concave for $t \ge D^2$.
Then by the Pr\'ekopa Leindler inequality, $X_t = B_t+C$ is also log-concave for all $t \ge D^2$.
\end{proof}

We are now ready to prove Corollary~\ref{Cor:EventConv}.

\begin{proof}[Proof of Corollary~\ref{Cor:EventConv}]
By Lemma~\ref{Lem:BddConvLC}, $X_t \sim \rho_t$ is log-concave for $t \ge D^2$.
By Lemma~\ref{Lem:KJMut} and~\ref{Lem:DerMut}, this implies mutual information is convex for all $t \ge D^2$.
\end{proof}

\subsection{Proof of Lemma~\ref{Lem:HesHeat}}
\label{App:ProofHesHeat}

We use the same setting and notation as in Appendix~\ref{App:ProofEventConv}.

We first establish the following result.

\begin{lemma}\label{Lem:HessCov}
Along the heat flow, for all $x,y \in \R^n$,
\begin{align}\label{Eq:HessCov}
-\nabla^2_y \log \rho_{0|t}(x\,|\,y) = \frac{\Cov(\rho_{0|t}(\cdot\,|\,y))}{t^2}.
\end{align}
\end{lemma}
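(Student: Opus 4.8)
The plan is to combine the Bayes factorization of the posterior density with an explicit computation of the score and Hessian of the marginal $\rho_t$. First I would write, via Bayes' rule, $\log \rho_{0|t}(x\,|\,y) = \log \rho_0(x) + \log \rho_{t|0}(y\,|\,x) - \log \rho_t(y)$, and differentiate twice in $y$. Since $\log \rho_0(x)$ is constant in $y$, and $\rho_{t|0}(y\,|\,x) = \gamma_t(y-x)$ is Gaussian so that $\log \rho_{t|0}(y\,|\,x)$ is quadratic in $y$ with $\nabla^2_y \log \gamma_t(y-x) = -\frac{I}{t}$, this collapses to
\begin{align*}
-\nabla^2_y \log \rho_{0|t}(x\,|\,y) = \frac{I}{t} + \nabla^2_y \log \rho_t(y),
\end{align*}
which is already independent of $x$, matching the $x$-free right-hand side of~\eqref{Eq:HessCov}.

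The remaining task is then to show $\nabla^2_y \log \rho_t(y) = \frac{1}{t^2}\Cov(\rho_{0|t}(\cdot\,|\,y)) - \frac{I}{t}$. To this end I would differentiate $\rho_t(y) = \int_{\R^n} \rho_0(x)\,\gamma_t(y-x)\,dx$ under the integral sign, using $\nabla_y \gamma_t(y-x) = \frac{x-y}{t}\gamma_t(y-x)$ and $\nabla^2_y \gamma_t(y-x) = \left(\frac{(x-y)(x-y)^\top}{t^2} - \frac{I}{t}\right)\gamma_t(y-x)$. Dividing by $\rho_t(y)$ converts these into posterior expectations against $\rho_{0|t}(\cdot\,|\,y)$: the score becomes $\nabla_y \log \rho_t(y) = \frac{1}{t}(m-y)$ with $m = \E[X_0 \mid X_t = y]$ (Tweedie's identity), and $\frac{\nabla^2_y \rho_t(y)}{\rho_t(y)} = \frac{1}{t^2}\E[(X_0-y)(X_0-y)^\top \mid X_t=y] - \frac{I}{t}$.

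Next I would apply the identity $\nabla^2 \log f = \frac{\nabla^2 f}{f} - (\nabla \log f)(\nabla \log f)^\top$ together with a bias--variance decomposition. Writing $X_0 - y = (X_0 - m) + (m-y)$ inside the conditional second moment, the cross terms vanish since $\E[X_0 - m \mid X_t = y] = 0$, leaving $\E[(X_0-y)(X_0-y)^\top \mid X_t=y] = \Cov(\rho_{0|t}(\cdot\,|\,y)) + (m-y)(m-y)^\top$. The outer-product piece $\frac{(m-y)(m-y)^\top}{t^2}$ then cancels exactly against $(\nabla_y \log \rho_t(y))(\nabla_y \log \rho_t(y))^\top$, yielding the claimed formula for $\nabla^2_y \log \rho_t(y)$; substituting into the first display gives~\eqref{Eq:HessCov}.

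I expect the main difficulty to be bookkeeping rather than anything conceptually deep: one must justify differentiating twice under the integral sign (which is legitimate because $\rho_t$ is smooth and strictly positive for $t > 0$, as noted after the definition of the heat flow), and then execute the bias--variance cancellation carefully so that both the $x$-dependence and the score outer product drop out, leaving precisely the posterior covariance scaled by $t^{-2}$.
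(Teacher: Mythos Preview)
Your proof is correct, but it follows a different route from the paper's. The paper observes directly that the posterior $\rho_{0|t}(\cdot\,|\,y)$ is an exponential family in the natural parameter $\eta = y/t$, namely $\rho_{0|t}(x\,|\,y) = h(x)\,e^{\langle x,\eta\rangle - A(\eta)}$ with base measure $h(x) = \rho_0(x)e^{-\|x\|^2/(2t)}$; then $-\nabla^2_y \log \rho_{0|t}(x\,|\,y) = t^{-2}\nabla^2_\eta A(\eta)$, and the standard exponential-family identity $\nabla^2_\eta A(\eta) = \Cov(\rho_{0|t}(\cdot\,|\,y))$ finishes the argument in one line. Your approach instead goes through Bayes' rule to reduce to computing $\nabla^2_y \log \rho_t(y)$, which you obtain via Tweedie's formula and a bias--variance decomposition. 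Both are fine; the exponential-family packaging is slicker and avoids the score/outer-product cancellation, while your computation is more self-contained and, as a bonus, derives the identity $-\nabla^2_y \log \rho_t(y) = \tfrac{1}{t}\bigl(I - \tfrac{1}{t}\Cov(\rho_{0|t}(\cdot\,|\,y))\bigr)$ (the paper's Lemma~\ref{Lem:HesHeat}) along the way, whereas the paper deduces that lemma \emph{from} this one.
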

\begin{proof}
We observe that the conditional density $\rho_{0|t}(x\,|\,y)$ can be written as an exponential family distribution over $x$ with parameter $\eta = \frac{y}{t}$:
$$\rho_{0|t}(x\,|\,y) = h(x) e^{\langle x,\eta \rangle - A(\eta)}$$
where $h(x) = \rho_0(x) e^{-\frac{\|x\|^2}{2t}}$ is the base measure, and
$$A(\eta) = \log \int_{\R^n} h(x) e^{\langle x,\eta \rangle} \, dx$$
is the log-partition function, or normalizing constant.
Then we have
$$-\nabla^2_y \log \rho_{0|t}(x\,|\,y) = \frac{1}{t^2} \nabla^2_\eta A(\eta).$$
By a general identity for exponential family~\cite{WJ08}, or simply by differentiating, we have that
$$\nabla^2_\eta A(\eta) = \Cov(\rho_{0|t}(\cdot\,|\,y)).$$
Combining the two expressions above yields the result.
\end{proof}

We are now ready to prove Lemma~\ref{Lem:HesHeat}.

\begin{proof}[Proof of Lemma~\ref{Lem:HesHeat}]
From the factorization
$$\rho_t(y) \rho_{0|t}(x\,|\,y) = \rho_0(x) \rho_{t|0}(y\,|\,x)$$
we have, along the heat flow and by Lemma~\ref{Lem:HessCov},
\begin{align*}
-\nabla^2_y \log \rho_t(y) &= -\nabla^2_y \log \rho_{t|0}(y\,|\,x) + \nabla^2_y \log \rho_{0|t}(x\,|\,y) \\
&= \frac{1}{t} I -\frac{1}{t^2}\Cov(\rho_{0|t}(\cdot\,|\,y)),
\end{align*}
as desired.
\end{proof}

\subsection{Proof of Theorem~\ref{Thm:EventConvFI}}

Let $X_t = X_0 + \sqrt{t} Z$ denote the heat flow.
We first establish some results.

\begin{lemma}\label{Lem:KHeat}
Along the heat flow,
$$K(X_0;X_t) = \frac{2}{t^3} \Var(X_0\,|\,X_t) - \frac{1}{t^4} \E[\|\Cov(\rho_{0|t}(\cdot\,|\,X_t))\|^2_{\HS}].$$
\end{lemma}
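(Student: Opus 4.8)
The plan is to leverage the baseline comparison $K(X_0;X_t) = \frac{n}{t^2} - K(X_t)$ recorded in Section~\ref{Sec:MutHeat}, which reduces the claim to computing the non-mutual second-order Fisher information $K(X_t)$ of the marginal $\rho_t$. The crucial input is Lemma~\ref{Lem:HesHeat}, which gives the Hessian of $\log\rho_t$ in closed form as $-\nabla^2_y\log\rho_t(y) = \frac{1}{t}I - \frac{1}{t^2}\Cov(\rho_{0|t}(\cdot\,|\,y))$. Writing $C_y = \Cov(\rho_{0|t}(\cdot\,|\,y))$ for brevity, I would substitute this expression into $K(X_t) = \int \rho_t(y)\,\|\nabla^2_y\log\rho_t(y)\|^2_{\HS}\,dy$ and expand the squared Hilbert--Schmidt norm.

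First I would expand $\bigl\|\frac{1}{t}I - \frac{1}{t^2}C_y\bigr\|^2_{\HS} = \frac{n}{t^2} - \frac{2}{t^3}\langle I, C_y\rangle_{\HS} + \frac{1}{t^4}\|C_y\|^2_{\HS}$, using $\|I\|^2_{\HS} = n$ and $\langle I, C_y\rangle_{\HS} = \Tr(C_y)$. Then I would integrate against $\rho_t(y)$ term by term: the first term contributes the constant $\frac{n}{t^2}$; the cross term contributes $-\frac{2}{t^3}\E[\Tr(C_{X_t})]$; and the last term contributes $\frac{1}{t^4}\E[\|C_{X_t}\|^2_{\HS}]$. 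The only bookkeeping point is to recognize that $\E[\Tr(C_{X_t})]$, the expected trace of the conditional covariance, is precisely the conditional variance $\Var(X_0\,|\,X_t)$; this is the same identification already implicit in the relation $\Phi(X_0\,|\,X_t) = \frac{1}{t^2}\Var(X_0\,|\,X_t)$ from Section~\ref{Sec:MutHeat}, and it follows from Lemma~\ref{Lem:HessCov}, since that lemma gives $\widetilde\Phi(X_0\,|\,X_t\!=\!y) = \frac{1}{t^2}C_y$ and hence $\Phi(X_0\,|\,X_t\!=\!y) = \frac{1}{t^2}\Tr(C_y)$.

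Combining these, $K(X_t) = \frac{n}{t^2} - \frac{2}{t^3}\Var(X_0\,|\,X_t) + \frac{1}{t^4}\E[\|C_{X_t}\|^2_{\HS}]$, and subtracting this from $\frac{n}{t^2}$ cancels the constant and flips the remaining signs, yielding exactly the claimed formula. I do not anticipate a genuine obstacle: once Lemma~\ref{Lem:HesHeat} is in hand the computation is a direct expansion, and the main thing to get right is the sign accounting and the notational identification of the trace term with $\Var(X_0\,|\,X_t)$.

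As an alternative route that avoids the baseline formula, I could apply Lemma~\ref{Lem:KPsi} directly. Lemma~\ref{Lem:HessCov} shows that $\nabla^2_y\log\rho_{0|t}(x\,|\,y) = -\frac{1}{t^2}C_y$ is independent of $x$, so $\Psi(X_0\,|\,X_t) = \frac{1}{t^4}\E[\|C_{X_t}\|^2_{\HS}]$ while $\widetilde\Phi(X_0\,|\,X_t\!=\!y) = \frac{1}{t^2}C_y$; plugging these together with the Hessian of $\log\rho_t$ from Lemma~\ref{Lem:HesHeat} into the inner-product term of Lemma~\ref{Lem:KPsi} gives the same answer once the two $\E[\|C_{X_t}\|^2_{\HS}]$ contributions are combined. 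Either path is short, and I would present the baseline-formula version as the cleaner one.
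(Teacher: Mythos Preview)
Your proposal is correct and follows essentially the same route as the paper: both plug the Hessian formula from Lemma~\ref{Lem:HesHeat} into the definition of $K(X_t)$, expand the squared Hilbert--Schmidt norm, identify the trace term with $\Var(X_0\,|\,X_t)$, and then use $K(X_t\,|\,X_0)=n/t^2$ (equivalently the baseline formula $K(X_0;X_t)=\frac{n}{t^2}-K(X_t)$) to finish. The alternative path through Lemma~\ref{Lem:KPsi} you sketch is a nice extra observation but is not needed.
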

\begin{proof}
Squaring and taking the expectation of the identity~\eqref{Eq:HesHeat} in Lemma~\ref{Lem:HesHeat} gives
\begin{align*}
K&(X_t) = \frac{1}{t^2} \E\Big[\Big\|I - \frac{1}{t} \Cov(\rho_{0|t}(\cdot\,|\,X_t))\Big\|^2_{\HS}\Big] \\
&= \frac{\|I\|^2_{\HS}}{t^2} - \frac{2}{t^3} \E[\Var( \rho_{0|t}(\cdot\,|\,X_t) )] \\
&~~~~ + \frac{1}{t^4} \E[\|\Cov(\rho_{0|t}(\cdot\,|\,X_t))\|^2_{\HS}] \\
&= \frac{n}{t^2} - \frac{2}{t^3} \Var(X_0\,|\,X_t) + \frac{1}{t^4}  \E[\|\Cov(\rho_{0|t}(\cdot\,|\,X_t))\|^2_{\HS}].
\end{align*}
Since $K(X_t\,|\,X_0) = n/t^2$, this implies the desired result.
\end{proof}

\begin{lemma}\label{Lem:VarJ}
Assume $J(X_0) < +\infty$.
Along the heat flow,
$$\Var(X_0\,|\,X_t) \ge \frac{n^2}{J(X_0) + \frac{n}{t}}.$$
\end{lemma}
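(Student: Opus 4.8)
The plan is to combine a pointwise Cram\'er--Rao bound for the posterior (backward) distributions $\rho_{0|t}(\cdot\,|\,y)$ with Jensen's inequality, and then to recognize the resulting average conditional Fisher information as the quantity already computed in~\eqref{Eq:JXYX}. The key observation is that $\Var(X_0\,|\,X_t) = \E[\Var(\rho_{0|t}(\cdot\,|\,X_t))]$ is an average of posterior variances, and each posterior variance is controlled from below by the reciprocal of its own Fisher information.

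First I would record the scalar Cram\'er--Rao inequality: for any random variable $W$ in $\R^n$ with mean $\mu$ and smooth density $q$ of finite Fisher information,
$$\Var(W)\, J(W) \ge n^2.$$
This follows in one line. Integration by parts gives $\int_{\R^n} (x-\mu)_i\, \partial_j q(x)\, dx = -\delta_{ij}$, hence $\E[\langle W-\mu,\, -\nabla \log q(W)\rangle] = n$. Cauchy--Schwarz then yields
$$n \;=\; \E[\langle W-\mu,\, -\nabla\log q(W)\rangle] \;\le\; \sqrt{\E[\|W-\mu\|^2]}\,\sqrt{\E[\|\nabla\log q(W)\|^2]} \;=\; \sqrt{\Var(W)}\,\sqrt{J(W)}.$$

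Next I apply this pointwise to the posterior distribution $\rho_{0|t}(\cdot\,|\,y)$ for each fixed $y$, obtaining $\Var(\rho_{0|t}(\cdot\,|\,y)) \ge n^2/J(\rho_{0|t}(\cdot\,|\,y))$. Taking expectation over $X_t \sim \rho_t$ and using the convexity of $u \mapsto 1/u$ on $(0,\infty)$ via Jensen's inequality gives
$$\Var(X_0\,|\,X_t) \;=\; \E\big[\Var(\rho_{0|t}(\cdot\,|\,X_t))\big] \;\ge\; n^2\, \E\!\left[\frac{1}{J(\rho_{0|t}(\cdot\,|\,X_t))}\right] \;\ge\; \frac{n^2}{\E[J(\rho_{0|t}(\cdot\,|\,X_t))]}.$$
The denominator is exactly the expected conditional Fisher information $J(X_0\,|\,X_t) = \E[J(\rho_{0|t}(\cdot\,|\,X_t))]$, which~\eqref{Eq:JXYX} identifies as $J(X_0) + \tfrac{n}{t}$; the hypothesis $J(X_0) < \infty$ guarantees this average is finite, so it is finite for almost every $y$ and the Jensen step is legitimate. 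Substituting yields the claimed bound $\Var(X_0\,|\,X_t) \ge n^2/(J(X_0)+n/t)$.

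I expect the only delicate point to be the bookkeeping that keeps both inequalities pointing the same way: the pointwise Cram\'er--Rao bound lower-bounds each posterior variance by $n^2/J$, which is convex in $J$, so the averaging must be handled by Jensen (not by naively replacing $J$ by its mean), and one must confirm that the posterior scores $\nabla_x\log\rho_{0|t}(x\,|\,y) = \nabla\log\rho_0(x) - (x-y)/t$ are well defined almost everywhere, which is ensured precisely by the finiteness of $J(X_0)$. No hard estimate is needed beyond these two elementary inequalities and the previously established identity~\eqref{Eq:JXYX}.
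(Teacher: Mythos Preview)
Your proof is correct and follows essentially the same route as the paper: apply the uncertainty relation $\Var(W)\,J(W)\ge n^2$ pointwise to each posterior $\rho_{0|t}(\cdot\,|\,y)$, average over $X_t$, use convexity of $u\mapsto 1/u$ to pass the expectation inside, and then invoke~\eqref{Eq:JXYX}. The only cosmetic difference is that the paper phrases the step $\E[1/J]\ge 1/\E[J]$ as Cauchy--Schwarz rather than Jensen.
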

\begin{proof}
For any random variable $X \sim \rho$ in $\R^n$ with a smooth density, recall the uncertainty relationship
$$J(X) \Var(X) \ge n^2,$$
which also follows from the Cauchy-Schwarz inequality and integration by parts.
Applying this to the conditional densities $\rho_{0|t}(\cdot\,|\,y)$ yields
$$\Var(\rho_{0|t}(\cdot\,|\,y)) \ge \frac{n^2}{J(\rho_{0|t}(\cdot\,|\,y))}.$$
Taking expectation over $Y = X_t$ and noting that $\E[\frac{1}{J}] \ge \frac{1}{\E[J]}$ by Cauchy-Schwarz, we get
$$\Var(X_0\,|\,X_t) \ge \E\left[\frac{n^2}{J(\rho_{0|t}(\cdot\,|\,X_t))}\right] \ge \frac{n^2}{J(X_0\,|\,X_t)}.$$
Finally, recall from~\eqref{Eq:JXYX} that $J(X_0\,|\,X_t) = J(X_0)+\frac{n}{t}$.
\end{proof}

Recall that $M_4(X_0)$ is the fourth moment of $X_0$.

\begin{lemma}\label{Lem:CovM4}
Assume $M_4(X_0) < +\infty$.
Along the heat flow,
$$\E[\|\Cov(\rho_{0|t}(\cdot\,|\,X_t))\|^2_{\HS}] \le M_4(X_0).$$
\end{lemma}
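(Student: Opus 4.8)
The plan is to bound the integrand $\|\Cov(\rho_{0|t}(\cdot\,|\,y))\|_{\HS}^2$ pointwise in $y$ by a conditional fourth moment and then integrate using the tower property. Throughout write $\Sigma(y) = \Cov(\rho_{0|t}(\cdot\,|\,y))$, let $m(y) = \E[X_0 \mid X_t = y]$ denote the conditional mean, and let $\mu = \E[X_0]$ be the unconditional mean.

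First I would exploit that $\Sigma(y)$ is positive semidefinite: if $\lambda_1,\dots,\lambda_n \ge 0$ are its eigenvalues, then
$$\|\Sigma(y)\|_{\HS}^2 = \sum_{i=1}^n \lambda_i^2 \,\le\, \Big(\sum_{i=1}^n \lambda_i\Big)^2 = (\Tr \Sigma(y))^2.$$
This replaces the Hilbert-Schmidt norm by the square of the trace, and $\Tr\Sigma(y) = \Var(\rho_{0|t}(\cdot\,|\,y)) = \E[\|X_0 - m(y)\|^2 \mid X_t = y]$ is simply the conditional variance.

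Next I would recenter. Since $c \mapsto \E[\|X_0 - c\|^2 \mid X_t = y]$ is minimized over $c \in \R^n$ at the conditional mean $c = m(y)$, taking instead $c = \mu$ gives
$$\E[\|X_0 - m(y)\|^2 \mid X_t = y] \,\le\, \E[\|X_0 - \mu\|^2 \mid X_t = y],$$
and then the conditional Jensen inequality yields
$$\big(\E[\|X_0 - \mu\|^2 \mid X_t = y]\big)^2 \,\le\, \E[\|X_0 - \mu\|^4 \mid X_t = y].$$
Chaining these bounds gives $\|\Sigma(y)\|_{\HS}^2 \le \E[\|X_0 - \mu\|^4 \mid X_t = y]$ for every $y$. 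Finally I would take expectation over $y = X_t$ and invoke the tower property, $\E\big[\E[\|X_0 - \mu\|^4 \mid X_t]\big] = \E[\|X_0 - \mu\|^4] = M_4(X_0)$, which is exactly the claim.

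Every individual inequality is elementary; the one step requiring genuine care---and the one I would treat as the crux---is the recentering of the conditional second moment from $m(y)$ to the global mean $\mu$, where the assumption $M_4(X_0) < \infty$ (hence $M_2(X_0) < \infty$) ensures the relevant conditional moments are finite so that the minimization-at-the-mean identity applies. As an alternative to recentering, one could bound $\Sigma(y) \preceq \Lambda(y) := \E[(X_0 - \mu)(X_0 - \mu)^\top \mid X_t = y]$ in the positive semidefinite order and use that $A \mapsto \|A\|_{\HS}^2$ is monotone on PSD matrices, but the trace route above is shorter.
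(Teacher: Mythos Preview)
Your proof is correct and follows essentially the same chain of inequalities as the paper: bound $\|\Sigma(y)\|_{\HS}^2$ by $(\Tr\Sigma(y))^2$, recenter the conditional variance from $m(y)$ to the global mean $\mu$, apply conditional Jensen to pass to the fourth moment, and finish with the tower property. The only difference is presentational---you spell out the eigenvalue argument for the HS-versus-trace bound and the variational characterization of the conditional mean, whereas the paper states these steps tersely.
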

\begin{proof}
Let $\mu_0 = \E[X_0]$.
For each $y \in \R^n$,
\begin{align*}
\|\Cov(\rho_{0|t}(\cdot\,|\,y))\|^2_{\HS} 
&\le (\Tr(\Cov(\rho_{0|t}(\cdot\,|\,y))))^2 \\
&= (\Var(\rho_{0|t}(\cdot\,|\,y)))^2  \\
&\le \left( \E_{\rho_{0|t}(\cdot\,|\,y)}[\|X-\mu_0\|^2]\right)^2 \\
&\le \E_{\rho_{0|t}(\cdot\,|\,y)}[\|X-\mu_0\|^4].
\end{align*}
Taking expectation over $Y=X_t$ and applying the tower property of expectation gives the result.
\end{proof}

We are now ready to prove Theorem~\ref{Thm:EventConvFI}.

\begin{proof}[Proof of Theorem~\ref{Thm:EventConvFI}]
By Lemma~\ref{Lem:KHeat},~\ref{Lem:VarJ}, and~\ref{Lem:CovM4}, we have
$$K(X_0;X_t) \ge \frac{2n^2}{t^3(J(X_0)+\frac{n}{t})} - \frac{M_4(X_0)}{t^4}.$$
The right-hand side above is nonnegative if $2n^2t^4 \ge t^3M_4(X_0)(J(X_0)+\frac{n}{t})$, or equivalently, if 
$$2n^2 t^2 - t J(X_0)M_4(X_0) - nM_4(X_0) \ge 0.$$
Therefore, $K(X_0;X_t) \ge 0$ if $t$ is larger than the upper root of the quadratic polynomial above, which is the case when
$$t \ge \frac{J(X_0)M_4(X_0)}{4n^2}\left(1+\sqrt{1+\frac{8n}{J(X_0)^2 M_4(X_0)}}\right).$$
Furthermore, by Cauchy-Schwarz and the uncertainty relationship,
$$J(X_0)^2 M_4(X_0) \ge J(X_0)^2 \Var(X_0)^2 \ge n^4.$$
Plugging this to the bound above and further using $n \ge 1$, we conclude that 
$K(X_0;X_t) \ge 0$, and hence mutual information is convex, whenever
$$t \ge \frac{J(X_0)M_4(X_0)}{n^2}.$$
\end{proof}

\subsection{Proof of Theorem~$\ref{Thm:GenMixt}$}

At each $t > 0$, the density of $X_t \sim \sum_{i=1}^k p_i \N(a_i,tI)$ is
$$\rho_t(y) = \frac{1}{(2\pi t)^{n/2}} \sum_{i=1}^k p_i e^{-\frac{\|y-a_i\|^2}{2t}}.$$
The entropy of $X_t$ is
$$H(X_t)
= \frac{n}{2} \log (2\pi t) - \E\left[ \log \left( \sum_{i=1}^k p_i e^{-\frac{\|X_t-a_i\|^2}{2t}} \right) \right].$$
The expectation is over the mixture $X_t \sim \sum_{i=1}^k p_i \N(a_i, tI)$, which we split into a sum over $i=1,\dots,k$ of the individual expectations over $Y \sim \N(a_i, tI)$.
When $Y \sim \N(a_i, tI)$, we write $Y = a_i + \sqrt{t} Z$ where $Z \sim \N(0,I)$.
Then we can write the entropy above as
{\small
\begin{align*}
&H(X_t) - \frac{n}{2} \log (2\pi t) \\
&= - \sum_{i=1}^k p_i \E\Big[\log\Big(p_i e^{-\frac{\|Z\|^2}{2}} + \sum_{j \neq i} p_j e^{-\frac{\|\sqrt{t}Z+a_i-a_j\|^2}{2t}}\Big)\Big] \\
&= - \sum_{i=1}^k p_i \E\Big[ \log p_i - \frac{\|Z\|^2}{2} + \log \Big(1 +  \sum_{j \neq i} \frac{p_j}{p_i} e^{\frac{\|Z\|^2}{2}-\frac{\|\sqrt{t}Z+a_i-a_j\|^2}{2t}}\Big)\Big] \\
&= h(p) + \frac{n}{2} - \sum_{i=1}^k p_i \E\Big[ \log \Big(1 +  \sum_{j \neq i} \frac{p_j}{p_i} e^{\frac{\|Z\|^2}{2}-\frac{\|\sqrt{t}Z+a_i-a_j\|^2}{2t}}\Big)\Big]
\end{align*}
}
where $h(p) = -\sum_{i=1}^k p_i \log p_i$ is the discrete entropy.

Since $H(X_t\,|\,X_0) = \frac{n}{2} \log (2\pi t e)$,
we have for mutual information
{\small
$$h(p) - I(X_0;X_t) = \sum_{i=1}^k p_i \E\Big[ \log \Big(1 +  \sum_{j \neq i} \frac{p_j}{p_i} e^{\frac{\|Z\|^2}{2}-\frac{\|\sqrt{t}Z+a_i-a_j\|^2}{2t}}\Big)\Big].$$
}

Clearly $h(p)-I(X_0;X_t) \ge 0$ since the logarithm on the right-hand side above is positive.

On the other hand, using the inequality $\log(1+\sum_j x_j) \le \sum_j \log(1+x_j)$ for $x_j > 0$, we also have the upper bound
{\small
$$h(p)-I(X_0;X_t) \le \sum_{i=1}^k p_i \sum_{j \neq i} \E\Big[ \log \Big(1 + \frac{p_j}{p_i} e^{\frac{\|Z\|^2}{2}-\frac{\|\sqrt{t}Z+a_i-a_j\|^2}{2t}}\Big)\Big].$$
}
For each $i \neq j$, the exponent on the right-hand side above is
$$\frac{\|Z\|^2}{2}-\frac{\|\sqrt{t}Z+a_i-a_j\|^2}{2t} = -\frac{\langle Z,a_i-a_j\rangle}{\sqrt{t}} - \frac{\|a_i-a_j\|^2}{2t},$$
which has the $\N(-\frac{\|a_i-a_j\|^2}{2t},\frac{\|a_i-a_j\|^2}{t})$ distribution in $\R$, so it has the same distribution as $-\frac{\|a_i-a_j\|^2}{2t} + \frac{\|a_i-a_j\|}{\sqrt{t}} Z_1$ where $Z_1 \sim \N(0,1)$ is the standard one-dimensional Gaussian.
Thus, we can write the upper bound above as
$$h(p) - I(X;Y) \le \sum_{i=1}^k p_i \sum_{j \neq i} \E\left[ \log \left(1 + b_{ij} e^{c_{ij} Z_1 - \frac{c_{ij}^2}{2}}\right)\right]$$
where $b_{ij} = \frac{p_j}{p_i}$ and $c_{ij} = \frac{\|a_i-a_j\|}{\sqrt{t}}$, and $Z_1 \sim \N(0,1)$ in $\R$.

By Lemma~\ref{Lem:Log2} below, if $c_{ij} \ge \max\{1,\frac{26}{b_{ij}}\}$, then we have
\begin{align*}
h(p) - I(X_0;X_t) \le 3 \sum_{i=1}^k p_i \sum_{j \neq i} b_{ij} e^{-0.085c_{ij}^2}.
\end{align*}
Note that $b_{ij} = \frac{p_j}{p_i} \le \|p\|_\infty$ and $c_{ij}^2 = \frac{\|a_j-a_i\|^2}{t} \ge \frac{m^2}{t}$, so
\begin{align*}
h(p) - I(X_0;X_t) &\le 3 \sum_{i=1}^k p_i \sum_{j \neq i} \|p\|_\infty e^{-0.085 \frac{m^2}{t}} \\
&= 3(k-1) \|p\|_\infty e^{-0.085 \frac{m^2}{t}}.
\end{align*}
Now, the condition $c_{ij} \ge \max\{1,\frac{26}{b_{ij}}\}$ is equivalent to 
$$t \le \frac{\|a_i-a_j\|^2}{\max\{1,\frac{26}{b_{ij}}\}^2}.$$
Since $\|a_i-a_j\|^2 \ge m^2$ and $\frac{1}{b_{ij}} = \frac{p_i}{p_j} \le \|p\|_\infty$,
 the condition above is satisfied when
$$t \le \frac{m^2}{\max\{1,26\|p\|_\infty\}^2} = \frac{m^2}{26^2\|p\|_\infty^2}.$$

Thus, we conclude that if $t \le \frac{m^2}{676\|p\|_\infty^2}$, then
\begin{align*}
h(p) - I(X_0;X_t) \le 3 (k-1) \|p\|_\infty e^{-0.085\frac{m^2}{t}}
\end{align*}
as desired.

To complete the proof of Theorem~\ref{Thm:GenMixt}, it remains to prove the following estimate.

\begin{lemma}\label{Lem:Log2}
Let $b > 0$, $c \ge \max\{1,\frac{26}{b}\}$, and $Z \sim \N(0,1)$.
Then
$$\E[\log(1 + be^{cZ-\frac{c^2}{2}})\big] \le 3be^{-0.085 c^2}.$$
\end{lemma}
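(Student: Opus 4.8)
The plan is to dominate the integrand $\log(1+\,\cdot\,)$ by a square root, which reduces the expectation to a Gaussian moment generating function, and then to use the hypothesis $c \ge 26/b$ to upgrade the resulting $\sqrt{b}$ to the linear $b$ that appears in the claimed bound. The point is that the naive bound $\log(1+x) \le x$ only gives $\E[\log(1+be^{cZ-c^2/2})] \le b\,\E[e^{cZ-c^2/2}] = b$, which has no decay in $c$; the right tail of $be^{cZ-c^2/2}$ inflates this expectation, and the role of the square root is to tame that tail while keeping the moment generating function easy.

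First I would record the elementary pointwise inequality $\log(1+x) \le 2\sqrt{x}$, valid for all $x \ge 0$. Writing $f(x) = 2\sqrt{x} - \log(1+x)$, we have $f(0) = 0$ and $f'(x) = \frac{1}{\sqrt{x}} - \frac{1}{1+x} \ge 0$ since $1 + x \ge \sqrt{x}$, so $f \ge 0$. Applying this with $x = b\,e^{cZ - c^2/2}$ and taking expectations, the square root linearizes the exponent so that the expectation collapses to a standard Gaussian moment generating function:
\[
\E\!\left[\log\!\big(1 + b\,e^{cZ - \frac{c^2}{2}}\big)\right] \;\le\; 2\sqrt{b}\,\E\!\left[e^{\frac{1}{2}(cZ - \frac{c^2}{2})}\right] \;=\; 2\sqrt{b}\,e^{-\frac{c^2}{8}},
\]
where the last equality uses $\E[e^{(c/2)Z}] = e^{c^2/8}$ together with the deterministic factor $e^{-c^2/4}$.

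It then remains to convert this into the stated form, i.e.\ to show $2\sqrt{b}\,e^{-c^2/8} \le 3 b\, e^{-0.085 c^2}$, equivalently $\frac{2}{3\sqrt{b}}\,e^{-0.04 c^2} \le 1$. Here I would invoke the hypothesis $c \ge 26/b$, that is $b \ge 26/c$, so that $\tfrac{1}{\sqrt{b}} \le \sqrt{c/26}$; the left-hand side is then at most $\frac{2}{3\sqrt{26}}\,\sqrt{c}\,e^{-0.04 c^2}$. Since $\sqrt{c}\,e^{-0.04 c^2}$ is uniformly bounded over $c>0$ (maximized near $c \approx 2.5$, with value well below $1.3$), this is at most $\frac{2}{3\sqrt{26}}\cdot 1.3 < 1$, which closes the estimate. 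The side condition $c \ge 1$ is not even needed here, only $b \ge 26/c$.

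The main obstacle is conceptual rather than computational: the target is linear in $b$, whereas the sub-Gaussian right tail of $b\,e^{cZ - c^2/2}$ naturally produces only a $\sqrt{b}$ factor, and one must locate the correct square-root envelope $\log(1+x)\le 2\sqrt{x}$ that simultaneously makes the moment generating function converge and leaves exactly enough room in the exponent (the gap between $\tfrac18 = 0.125$ and $0.085$) to absorb the $\sqrt{b}\to b$ trade. The hypothesis $c \ge 26/b$ is precisely the device that bridges this gap, and the remaining work is only to verify the resulting one-variable inequality over the admissible range.
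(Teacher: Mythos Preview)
Your proof is correct and takes a genuinely different route from the paper's. The paper's argument is a three-region split of the integration domain $\{Z<(1-\eta)\tfrac{c}{2}\}$, $\{(1-\eta)\tfrac{c}{2}\le Z<\tfrac{c}{2}\}$, $\{Z\ge\tfrac{c}{2}\}$, on each of which $\log(1+be^{cZ-c^2/2})$ is bounded by a constant or a linear function of $Z$; the pieces are then controlled by Gaussian tail bounds, and the free parameter $\eta$ is optimized to $\eta^\ast=3-\sqrt{8}$, yielding the exponent $\eta^\ast/2\approx 0.0858$. Your approach replaces all of this with the single envelope $\log(1+x)\le 2\sqrt{x}$, which collapses the expectation to the Gaussian moment generating function $\E[e^{(c/2)Z}]=e^{c^2/8}$ and immediately gives the sharper intermediate bound $2\sqrt{b}\,e^{-c^2/8}$. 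The conversion to the $b$-linear form then spends part of the exponent (from $0.125$ down to $0.085$) via $b\ge 26/c$, exactly as you describe. Your argument is shorter, avoids the region optimization entirely, and as you note does not even use the hypothesis $c\ge 1$; the paper's proof, on the other hand, keeps the dependence on $b$ linear throughout and so makes more transparent where the constraint $c\ge 26/b$ enters.
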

\begin{proof}
We use the standard tail bound $\Pr(Z \ge x) \le \frac{1}{\sqrt{2\pi}} \frac{1}{x} e^{-\frac{x^2}{2}}$ for all $x > 0$, which follows from using the inequality $1 \le \frac{z}{x}$ in the integration.
In particular, for $x \ge \frac{1}{\sqrt{2\pi}}$ we have $\Pr(Z \ge x) \le e^{-\frac{x^2}{2}}$.

Let $0 < \eta < 1$.
We split the expectation into three parts:
\begin{enumerate}
  \item For $Z < (1-\eta)\frac{c}{2}$: We have $cZ-\frac{c^2}{2} < -\eta\frac{c^2}{2}$, so $\log(1 + be^{cZ-\frac{c^2}{2}}) \le \log(1+be^{-\eta\frac{c^2}{2}}) \le be^{-\eta\frac{c^2}{2}}$.
 The contribution to the expectation from this region is at most $be^{-\eta\frac{c^2}{2}} \Pr(Z < (1-\eta)\frac{c}{2}) \le be^{-\eta\frac{c^2}{2}}$.
 
 \item For $(1-\eta)\frac{c}{2} \le Z < \frac{c}{2}$: We have $cZ-\frac{c^2}{2} < 0$, so $\log(1 + be^{cZ-\frac{c^2}{2}}) \le \log(1+b) \le b$.
 The contribution to the expectation from this region is at most $b \Pr((1-\eta)\frac{c}{2} \le Z < \frac{c}{2}) \le b \Pr(Z \ge (1-\eta)\frac{c}{2}) \le b e^{-(1-\eta)^2\frac{c^2}{8}}$ where the last inequality holds for $c \ge \frac{2}{(1-\eta)\sqrt{2\pi}}$.
 
 \item For $Z \ge \frac{c}{2}$: We have $cZ-\frac{c^2}{2} \ge 0$, so $\log(1 + be^{cZ-\frac{c^2}{2}}) \le \log((1+b)e^{cZ-\frac{c^2}{2}}) = \log(1+b) + cZ - \frac{c^2}{2} \le b + cZ$.
 The contribution to the expectation from this region is at most
 $\int_{\frac{c}{2}}^\infty (b+cz) \frac{1}{\sqrt{2\pi}} e^{-\frac{z^2}{2}} dz
 = b \Pr(Z \ge \frac{c}{2}) + \frac{c}{\sqrt{2\pi}} e^{-\frac{c^2}{8}}
 \le (b+c) e^{-\frac{c^2}{8}}$,
 where the last inequality holds for $c \ge \frac{2}{\sqrt{2\pi}}$.
\end{enumerate}

Combining the three parts above, we have that for $c \ge \frac{2}{1-\eta}$,
$$\E[\log(1 + be^{cZ-\frac{c^2}{2}})] \le be^{-\eta\frac{c^2}{2}} + b e^{-(1-\eta)^2\frac{c^2}{8}} + (b+c) e^{-\frac{c^2}{8}}.$$
The leading exponent is $\min\{\eta, \frac{(1-\eta)^2}{4}\} \frac{c^2}{2}$, which is maximized by $\eta^\ast = 3-\sqrt{8} \approx 0.1716$.
Set $\eta = \eta^\ast$.
Note that for $c \ge \frac{2}{(\frac{1}{4}-\eta^\ast)b}$ we have $\frac{c^2}{2}(\frac{1}{4}-\eta^\ast) \ge \frac{c}{b} \ge \log(1+\frac{c}{b})$, so $(b+c)e^{-\frac{c^2}{8}} \le b e^{-\eta^\ast \frac{c^2}{2}}$.

Thus, for $c \ge \max\{\frac{2}{(1-\eta^\ast)\sqrt{2\pi}}, \frac{2}{(\frac{1}{4}-\eta^\ast)b}\}$, we have 
$$\E\left[\log\left(1 + be^{cZ-\frac{c^2}{2}}\right)\right] \le 3be^{-\eta^\ast\frac{c^2}{2}}.$$
Since $\frac{\eta^\ast}{2} \approx 0.0858 > 0.085$, $\frac{2}{(1-\eta^\ast)\sqrt{2\pi}} \approx 0.9631 < 1$, and $\frac{2}{\frac{1}{4}-\eta^\ast} \approx 25.5014 < 26$, we can simplify this conclusion by saying that for $c \ge \max\{1,\frac{26}{b}\}$ we have 
$$\E\left[\log\left(1 + be^{cZ-\frac{c^2}{2}}\right)\right] \le 3be^{-0.085 c^2},$$
as desired.
\end{proof}

\subsection{Proof of Corollary~\ref{Cor:Last}}

From Theorem~\ref{Thm:GenMixt}, we have for small $t$,
$$\left|\frac{I(X_0;X_t) - h(p)}{t^\ell} \right| \le 3(k-1) \|p\|_\infty \frac{e^{-0.085 \frac{m^2}{t}}}{t^\ell}.$$
Inductively, this implies all derivatives of $I(X_0;X_t)$ tend to $0$ exponentially fast as $t \to 0$.

\end{document}